\DeclareMathOperator{\tw}{tw}
\def\CC{\mathcal{C}}
\def\FF{\mathcal{F}}
\def\HH{\mathcal{H}}
\def\KK{\mathcal{K}}
\newcommand{\FO}{{\sf FO}\xspace}
\newcommand{\MSO}{{\sf MSO}\xspace}
\newcommand{\MSOJ}{{\sf MSO$_{1}$}\xspace}
\newcommand{\MSOD}{{\sf MSO$_{2}$}\xspace}
\def\prebox#1{\mathop{\mbox{\rm #1}}}
\def \ew {c}
\def\R{\mathbb{R}}
\def\NN{\mathbb{N}}
\def\gc{\ensuremath{{\cal S}_G}\xspace}
\def\gcmc{\ensuremath{\mathsf{GCMC}}\xspace}
\def\sse{\subseteq}
\newcommand{\Feas}{{\rm Feas}\xspace}
\newtheorem{theorem}{Theorem}[section]
\newtheorem{lemma}{Lemma}
\newtheorem{cl}{Claim}
\newtheorem{observation}{Observation}
\newtheorem{assume}{Assumption}
\theoremstyle{remark}
\newtheorem{define}{Definition}
\newtheorem{remark}{Remark}
\newcounter{note}[section]
\newcommand{\ignore}[1]{}
\begin{document}

%

\title{Approximating Max-Cut under Graph-MSO Constraints}


\author[KAM]{Martin Koutecký}
\ead{koutecky@kam.mff.cuni.cz}
\author[UMich]{Jon Lee}
\ead{jonxlee@umich.edu}
\author[UMich]{Viswanath Nagarajan}
\ead{viswa@umich.edu}
\author[UMich]{Xiangkun Shen}
\ead{xkshen@umich.edu}

\address[KAM]{Technion -- Israel Institute of Technology, Faculty of IE\&M, Haifa, Israel}
\address[UMich]{IOE Dept., University of Michigan, Ann Arbor, MI 48109, USA}

\begin{abstract}
We consider the max-cut and max-$k$-cut problems under graph-based constraints. Our approach can handle any constraint specified using monadic second-order (MSO) logic on graphs of constant treewidth. We give a $\frac12$-approximation algorithm for this class of  problems.
\end{abstract}

\begin{keyword}
max cut \sep approximation algorithm \sep monadic second-order logic \sep treewidth \sep dynamic program
\MSC 68W25 \sep 68W05 \sep 68R10

\end{keyword}



\maketitle

\section{Introduction}\label{sec:intro}

\def\cons{\ensuremath{{\cal S}_G}\xspace}

This paper considers the classic max-cut problem under a class of graph-based constraints. The max-cut problem is a fundamental combinatorial-optimization problem which has many practical applications (see \cite{CD87,BGJR88,JS12,LRSST10}) as well as strong theoretical results (see \cite{GW95,KMO07}). There have also been a number of papers on designing approximation algorithms for {\em constrained} max-cut problems (see \cite{AHS01,AS99,CVZ14,FNS11,HKMPS15}).

In this paper, we are interested in constraints that are specified by an auxiliary constraint graph. Our main result is a $\frac12$-approximation algorithm for max-cut under any graph constraint that can be expressed in monadic second order logic (\MSO) (see \cite{Courcelle:1996}). This is closely related to a recent result by a subset of the authors; see \cite{SLN17}. The contribution of this paper is in
generalizing the class of constraints handled in \cite{SLN17}, making the algorithm design more systematic, and extending the result to the max-$k$-cut setting with $k$ instead of just $2$ parts.

In particular, \cite{SLN17} gave a $\frac12$-approximation algorithm for max-cut under any graph constraint $\gc$ that has a specific type of dynamic program for optimizing linear objectives. In order to apply this result, one also has to design such a dynamic program separately for each constraint $\gc$, which requires additional constraint-specific work. Indeed, \cite{SLN17} also gave constraint-specific dynamic programs for various graph constraints such as independent set, vertex cover, dominating set and connectivity, all on bounded-treewidth graphs.

In this paper, we  bypass the need for constraint-specific dynamic programs by utilizing the language and results from monadic second-order logic. We show that any \MSO constraint on a bounded-treewidth graph (defined formally in \S\ref{sec:prelim}) admits a dynamic program that satisfies the assumptions needed in \cite{SLN17}. Therefore, we immediately obtain $\frac12$-approximation algorithms for max-cut under any \MSO graph constraint. We note that \MSO constraints capture all the specific graph constraints in \cite{SLN17}, and much more.

We also extend these results to the setting of max-$k$-cut, where we seek  to partition the vertices into $k$ parts $\{U_i\}_{i=1}^k$
so as to maximize the weight of edges crossing the partition. In the constrained version, we additionally require each part $U_i$ to satisfy some \MSO graph property. We obtain a $\frac12$-approximation algorithm even in this setting ($k$ is fixed). This result is a significant generalization over \cite{SLN17} even for $k=2$, which corresponds to the usual max-cut problem: we now handle constraints on both sides of the cut.

\section{Preliminaries}\label{sec:prelim}
A $k$-partition of  vertex set $V$ is a function $h:V\rightarrow [k]$, where the $k$ parts are $U_\alpha = \{v\in V: h(v)=\alpha\}$ for $\alpha\in [k]$. Note that $\cup_{\alpha=1}^k U_\alpha = V$ and  $U_1,\cdots, U_k$ are disjoint. When we want to refer to the $k$ parts directly, we also use $\{U_\alpha\}_{\alpha=1}^k$ to denote the $k$-partition.

\begin{define}[\gcmc]
The input to the {\em graph-constrained max-cut} (\gcmc) problem consists of (i) an $n$-vertex graph $G=(V,E)$ with a graph property which implicitly specifies a collection ${\cal S}_G$ of vertex $k$-partitions,
and (ii) symmetric edge-weights $\ew:{V\choose 2} \rightarrow \R_+$. The \gcmc problem is to find a $k$-partition in \gc with the maximum weight of crossing edges:
\begin{equation}\label{eq:gcmc-defn}
\max_{h \in \gc} \quad \sum_{\substack{\{u, v\} \in {V\choose 2}  \\ h(u)\ne h(v)}} \ew(u,v).
\end{equation}
\end{define}

\paragraph{Tree Decomposition}
Given an undirected graph $G=(V,E)$, a tree decomposition  consists of a tree ${\cal T}=(I,F)$ and a collection of vertex subsets $\{X_i\subseteq V\}_{i\in I}$
such that:
\begin{itemize}
\setlength\itemsep{0pt}
	\item for each $v\in V$, the nodes $\{i \in I: v\in X_i\}$ are connected in ${\cal T}$, and
	\item for each edge $(u,v)\in E$, there is some node $i\in I$ with $u,v\in X_i$.
\end{itemize}

The width of such a tree decomposition is $\max_{i\in I}(|X_i|-1)$, and the treewidth of $G$ is the smallest width of any tree decomposition for $G$.

We  work with ``rooted'' tree decompositions, also specifying a root node $r\in I$. The depth $d$ of such a tree decomposition is the length of the longest root-leaf path in ${\cal T}$. The depth of any node $i\in I$ is the length of the $r-i$ path in ${\cal T}$. For any $i\in I$, the set $V_i$ denotes all the vertices at or below node $i$, that is
$$V_i:= \cup_{k \in {\cal T}_i}  X_k,$$
$$\mbox{where }{\cal T}_i=\{ k\in I : k \mbox{ in subtree of ${\cal T}$ rooted at }i\}.$$

The following result provides a convenient representation of ${\cal T}$.

\begin{theorem}[Balanced Tree Decomposition; see \cite{Bodlaender88}]
	Let $G=(V,E)$ be a graph with tree decomposition $({\cal T}=(I,F), \{X_i|i\in I\})$ of treewidth $k$. Then $G$ has a rooted tree decomposition $({\cal T'}=(I',F'), \{X'_i|i\in I'\})$ where ${\cal T'}$ is a binary tree of depth $2\lceil \log_\frac{5}{4} (2|V|)\rceil$ and treewidth at most $3k+2$. Moreover, for all $i\in I$, there is an $i'\in I'$ such that $X_i\subseteq X'_{i'}$.  The tree decomposition ${\cal T'}$ can be found in $O(|V|)$ time.
	\label{thm:treedecomp}
\end{theorem}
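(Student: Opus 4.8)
The plan is to obtain $\mathcal{T}'$ from $\mathcal{T}$ in two phases: a linear-time \emph{normalization} into a binary tree decomposition with $O(|V|)$ nodes and width still at most $k$, followed by a \emph{recursive balanced splitting} that rebuilds it into a shallow binary tree of width at most $3k+2$. For normalization I would first discard redundant nodes (bags contained in a neighbor's bag) so that the number of nodes is $O(|V|)$, and then replace each high-degree node by a small gadget of copies of its bag to enforce binary structure; neither step increases the width, both run in linear time, and every original bag $X_i$ remains contained in one of the resulting bags. The non-trivial content is entirely in the second phase.

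For the recursive splitting, I would process a subproblem consisting of a connected subtree $T'$ of the normalized tree together with a \emph{boundary} set $S \subseteq V$, namely the set of vertices occurring both inside $T'$ and outside it; at the top call $T'$ is the whole tree and $S=\emptyset$. In each call I would pick a node $c$ of $T'$ that is a \emph{balanced separator} of the tree (removing $c$ leaves components each with at most a constant fraction, to be tuned to the $5/4$ base, of the nodes of $T'$), emit a new node whose bag is $X'_{\mathrm{new}} := S \cup X_c$, and recurse on each component $T'_j$ of $T'-c$ with boundary $S_j := (S\cup X_c)\cap V(T'_j)$. Since each original node is selected as the separator $c$ in exactly one call and $X_c \subseteq X'_{\mathrm{new}}$ there, every normalized (hence every original) bag is contained in some new bag, giving the ``moreover'' claim. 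Balancedness makes the subtree shrink by a constant factor per level, so the recursion depth — and thus the depth of $\mathcal{T}'$ — is $O(\log|V|)$, matching $2\lceil\log_{5/4}(2|V|)\rceil$ after fixing constants. Validity of $\mathcal{T}'$ (connectivity of each vertex's occurrence set, coverage of every edge) follows because $X_c$ separates the components of $T'-c$ in $G$ and $S_j$ retains exactly the vertices shared between ancestor bags and $V(T'_j)$.

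The crux, and the step I expect to be the main obstacle, is controlling the width: I must show that the boundary $S$ is always contained in the union of at most two original bags, so that $|X'_{\mathrm{new}}| = |S \cup X_c| \le 3(k+1)$ and the width is at most $3k+2$. The key structural observation is that a subtree $T'$ touches the rest of the tree only at its \emph{attachment points}, and by the connectivity property of tree decompositions any vertex of $S$ must lie in a bag at such an attachment point. For a child $T'_j$ that does not inherit $T'$'s attachment point, every boundary vertex must cross through $c$, so $S_j \subseteq X_c$ sits in a single bag; only the unique child inheriting the old attachment point can additionally retain the inherited bag, keeping $S_j$ within two bags. Making this invariant precise — verifying that repeated splitting never forces a boundary to span three bags while simultaneously keeping the output binary and the split balanced enough for the stated depth — is the delicate bookkeeping at the heart of the proof; once it is established, the width, depth, containment, and (via linear-time tree-separator computation) running-time claims all follow.
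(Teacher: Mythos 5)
First, a point of reference: the paper itself contains no proof of Theorem~\ref{thm:treedecomp} --- it is imported as a black box from Bodlaender~\cite{Bodlaender88} --- so your proposal can only be judged against the standard proof of that result, not against any argument in this paper.

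Your outline follows the standard recursive-splitting route, and several pieces are fine: the linear-time normalization, the width arithmetic ($|S \cup X_c| \le 3(k+1)$ giving width $3k+2$), and the ``moreover'' containment claim. The genuine gap is precisely the step you flag as ``delicate bookkeeping'' and then defer: the two-bag invariant on the boundary $S$ is \emph{incompatible} with your stated rule of always choosing $c$ to be a balanced (centroid-type) separator of $T'$. Suppose $S$ is spread over two attachment nodes $a$ and $b$ of $T'$ (this situation is forced already one level below the root, since a child inherits both an old attachment bag and $X_c$; your own analysis, which speaks of ``the'' old attachment point and ``the unique child'' inheriting it, silently assumes there is only one). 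A balanced separator $c$ of $T'$ need not lie on the $a$--$b$ path: take $T'$ to be a short path $a$--$x$--$b$ with a large balanced subtree hanging off $x$; every balanced separator lies inside that subtree. For such a $c$, the component of $T'-c$ containing the path contains \emph{both} $a$ and $b$, so its boundary meets the bag at $a$, the bag at $b$, \emph{and} $X_c$ --- three bags --- and its next bag can have size $4(k+1)$, with further degradation at deeper levels. So the algorithm as stated does not give width $3k+2$; this is a failing step, not residual verification. The known repair is a two-case alternation: when $S$ has two attachment nodes, choose $c$ \emph{on the path between them} (removing such a $c$ separates $a$ from $b$, so every child again has at most two attachment bags, though the off-path children need not shrink); when $S$ has at most one attachment node, take a true centroid, which halves the subproblem. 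Off-path children produced in the first case have a single attachment bag, so they shrink in the following round. This alternation --- constant-factor shrinkage only every \emph{two} levels --- is exactly what produces a depth bound of the shape $2\lceil \log_{5/4}(2|V|)\rceil$; a single-rule recursion like yours cannot simply be ``tuned'' to it. (Bodlaender's original argument reaches the same parameters differently, via parallel tree contraction with rake/compress operations, merging a bounded number of bags per contraction, which is another way to avoid the conflict between balance and the two-bag invariant.)
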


\begin{define}[CSP instance]
	 A Constraint Satisfaction Problem (CSP) instance $J=(N,\CC)$ consists of:
	\begin{itemize}
\setlength\itemsep{0pt}
		\item a set $N$ of {\em boolean variables}, and
		\item a set $\CC$ of {\em constraints},
		where each
		constraint $C_{U} \in \CC$
		is a $|U|$-ary relation $C_U \subseteq \{0,1\}^U$ on some subset $U\subseteq N$.
	\end{itemize}
\end{define}

For a vector $x\in \{0,1\}^N$ and a subset $R$ of variables, we denote by $x|_{R}$ the \emph{restriction of $x$ to $R$}.
A vector $z\in \{0,1\}^N$ \emph{satisfies constraint} $C_U \in \CC$  if $z|_U \in C_U$.
We say that  $z\in \{0,1\}^N$
is \emph{a feasible assignment} for the CSP instance
$J$ if  $z$ satisfies every constraint $C\in \CC$. Let $\Feas(J)$ be the set of all feasible assignments of $J$.
Finally,  $\|\CC\| = \sum_{C_U \in \CC} |C_U|$ denotes the \emph{length of $\CC$}.

\begin{define}[Constraint graph]
	The  constraint graph of $J$,  denoted $G(J)$, is defined as $G(J)=(N,F)$
	where $F= \left\{\{u,v\} \ \mid\  \exists C_{U} \in \CC \textrm{ s.t. }
	\{u,v\} \subseteq U\right\}$.
\end{define}

\begin{define}[Treewidth of CSP]
	The  treewidth  $\tw(J)$ of a CSP instance $J$ is defined as the treewidth of its constraint graph $\tw(G(J))$.
\end{define}


\begin{define}[CSP extension]
	Let $J = (N, \CC)$ be a CSP instance.
	We say that $J' = (N', \CC')$ with $N \subseteq N'$ is an extension of $J$ if $\Feas(J) = \big\{z|_{N} ~\big|~ z \in \Feas(J')\big\}$.
\end{define}

\paragraph{Monadic Second Order Logic}

We briefly introduce \MSO over graphs.
In {\em first-order logic} (\FO) we have variables for individual vertices/edges
(denoted $x,y,\ldots$), equality for variables, quantifiers $\forall,\exists$
ranging over variables, and the standard Boolean connectives $\neg, \wedge, \vee, \implies$.
\MSO is the extension of \FO by quantification over sets (denoted $X,Y,\dots$).
Graph \MSO has the binary relational symbol $\prebox{edge}(x,y)$ encoding edges, and traditionally comes in two
flavours, \MSOJ and \MSOD, differing by the objects we are allowed to
quantify over:
in \MSOJ these are the vertices and vertex sets,
while in \MSOD we can additionally quantify over edges and edge sets.
For example, $3$-colorability can be expressed in \MSOJ as follows:
\begin{eqnarray*}
	\exists X_1,X_2,X_3 &&\left[\,
	\forall x \, (x\in X_1\vee x\in X_2\vee x\in X_3)  \right.
	\\ &&\wedge\bigwedge\nolimits_{i=1,2,3} \left. \!\!
	\forall x,y
	\left(x\not\in X_i\vee y\not\in X_i\right.\right.\\	
	&&\left.\left.\vee\neg\prebox{edge}(x,y)\right)
	\,\right]
\end{eqnarray*}
We  remark that \MSOD can express properties that are not \MSOJ definable.
As an example, consider Hamiltonicity on graph $G=(V,E)$; an equivalent description of a Hamiltonian cycle is that it is a connected $2$-factor of a graph:
\begin{align*}
&\varphi_{\text{ham}} \equiv \exists F \subseteq E: \, \varphi_{\text{2-factor}}(F) \wedge \varphi_{\text{connected}}(F) \\
&\varphi_{\text{2-factor}}(F)  \equiv (\forall v \in V: \, \exists e,f \in F: \, (e\neq f)\\
&\qquad \wedge (v \in e) \wedge (v \in f) ) \wedge \neg ( \exists v \in V: \,\\
&\qquad \exists e,f,g \in F: (e\neq f \neq g) \wedge (v \in e)\\
&\qquad \wedge (v \in f) \wedge (v \in g) ) \\
&\varphi_{\text{connected}}(F)  \equiv \neg \big[  \exists U,W \subseteq V:\, (U \cap W = \emptyset)\\
&\qquad \wedge (U \cup W = V)  \wedge \neg \big( \exists \{u,v\} \in F:\\
&\qquad u \in U \wedge v \in W \big) \big] \enspace .
\end{align*}

We use $\varphi$ to denote an \MSO formula and $G=(V,E)$ for the underlying graph. For a formula $\varphi$, we denote by $|\varphi|$ the \emph{size} (number of symbols) of $\varphi$.

In order to express constraints on $k$-vertex-partitions via \MSO, we use \MSO formulas $\varphi$ with $k$ free variables $\{U_\alpha\}_{\alpha=1}^k$ where (i) the $U_\alpha$ are enforced to form a partition of the vertex-set $V$, and (ii) each $U_\alpha$ satisfies some individual  \MSO constraint $\varphi_\alpha$. Because $k$ is constant, the size of the resulting \MSO formula is a constant as long as each of the \MSO constraints $\varphi_\alpha$ has constant size.

\paragraph{Connecting CSP and \MSO}

Consider an \MSO formula $\varphi$ with $k$ free variables on graph $G$ (as above). For a vector $t\in \{0,1\}^{V\times [k]}$, we write
$G,t \models \varphi$ if and only if $\varphi$ is satisfied by solution $U_\alpha=\{v\in V: t(({v,\alpha})) =1 \}$ for $\alpha\in [k]$.

\begin{define}[$CSP_\varphi(G)$ instance]\label{def:CSP_G}
	Let $G$ be a graph and $\varphi$ be an \MSOD-formula with $k$ free variables.
	By $CSP_\varphi(G)$ we denote the CSP instance $(N, \CC)$ with $N = \{t(({v,\alpha})) \mid v \in V(G), \alpha\in [k]\}$ and with a single constraint $\{t \mid G,t \models \varphi\}$.
\end{define}

Observe that $\Feas(CSP_\varphi(G))$ corresponds to the set of feasible assignments of $\varphi$ on $G$. Also, the treewidth of $CSP_\varphi(G)$ is $|V|k$ which is unbounded. The following result shows that there is an equivalent CSP extension that has constant treewidth.

\begin{theorem}[{\cite[Theorem 25]{KnopKMT:2017}}]\label{thm:MSO_CSP}
	Let $G=(V,E)$ be a graph with $\tw(G) = \tau$ and $\varphi$ be an \MSOD-formula with $k$ free variables.
	Then $CSP_\varphi(G)$ has a CSP extension $J$ with $\tw(J) \leq f(|\varphi|, \tau)$ and  $\|\CC_J\| \leq f(|\varphi|, \tau) \cdot |V|$.
\end{theorem}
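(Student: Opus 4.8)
The plan is to realize the automata-theoretic form of Courcelle's theorem inside a CSP. On bounded-treewidth graphs, \MSOD model-checking reduces to a finite bottom-up computation on a tree decomposition; I would record the intermediate states of this computation as fresh boolean variables, so that the resulting instance has a constraint graph that mirrors the (bounded-width) tree decomposition. Concretely, I would first apply the Balanced Tree Decomposition theorem (Theorem~\ref{thm:treedecomp}) to obtain a rooted binary tree decomposition $(\mathcal{T},\{X_i\})$ of $G$ of width $\tau'=O(\tau)$ with $O(|V|)$ nodes. Encoding each bag $X_i$ together with its induced edges and the restriction to $X_i$ of a candidate assignment $t\in\{0,1\}^{V\times[k]}$ turns the decorated decomposition into a labeled binary tree over a finite alphabet whose size depends only on $\tau'$ and $k$. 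By Courcelle's theorem~\cite{Courcelle:1996} (equivalently, by Feferman--Vaught/Shelah composition of \MSO-types over a decomposition), there is a finite deterministic bottom-up tree automaton $\mathcal{A}$ with state set $Q$, $|Q|\le f(|\varphi|,\tau)$, that reads this labeled tree and reaches an accepting state at the root exactly when $G,t\models\varphi$; the state at node $i$ records the \MSO-type, up to the quantifier rank of $\varphi$, of $G[V_i]$ with the partial assignment relativized to the boundary $X_i$, and the transition at $i$ composes the types of the two children with the local label.

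Next I would build the extension $J=(N',\CC')$. Keep the original variables $N=\{t((v,\alpha))\}$, and for each node $i$ of $\mathcal{T}$ introduce $O(\log|Q|)=f(|\varphi|,\tau)$ fresh boolean variables encoding the state $q_i\in Q$. For each node $i$ with children $j_1,j_2$, add one transition constraint whose scope is $\{q_i,q_{j_1},q_{j_2}\}\cup\{t((v,\alpha)):v\in X_i,\ \alpha\in[k]\}$ and which permits exactly the tuples consistent with the transition of $\mathcal{A}$ at $i$ (leaves use the initialization transition); at the root add a constraint forcing $q_r$ to be accepting. Since $\mathcal{A}$ is deterministic, for every $t\in\{0,1\}^N$ the state variables are uniquely forced, and $q_r$ is accepting iff $G,t\models\varphi$. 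Hence $\{z|_N : z\in\Feas(J)\}=\Feas(CSP_\varphi(G))$, so $J$ is indeed a CSP extension of $CSP_\varphi(G)$.

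It remains to bound the parameters. Each constraint has scope of size $O(\tau'k+\log|Q|)=f(|\varphi|,\tau)$ and its relation lists at most $2^{f(|\varphi|,\tau)}$ tuples, so its length is $f(|\varphi|,\tau)$; as $\mathcal{T}$ has $O(|V|)$ nodes this gives $\|\CC'\|\le f(|\varphi|,\tau)\cdot|V|$. For the treewidth, assign to each node $i$ the bag $B_i=\{t((v,\alpha)):v\in X_i,\ \alpha\in[k]\}\cup\{q_i,q_{j_1},q_{j_2}\}$ along the same tree $\mathcal{T}$. Every original variable $t((v,\alpha))$ occupies exactly the bags $\{i : v\in X_i\}$, which are connected in $\mathcal{T}$ by the tree-decomposition property of $G$; each state block $q_i$ occurs only in $B_i$ and in the bag of $i$'s parent, which are adjacent; and every constraint scope lies in some $B_i$. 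Thus $\{B_i\}$ is a valid tree decomposition of $G(J)$ of width $|B_i|-1=f(|\varphi|,\tau)$, giving $\tw(J)\le f(|\varphi|,\tau)$ (recall $k$ is constant, so it is absorbed into $f$).

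The substantive difficulty is concentrated in the first step: producing the finite automaton with a state count bounded solely by $|\varphi|$ and $\tau$, i.e.\ the composition theorem showing that the \MSO-type of a structure built by gluing two boundaried pieces is determined by the types of the pieces and that there are only $f(|\varphi|,\tau)$ such types. Once that is in hand, the remaining steps are routine bookkeeping translating the bounded-state bottom-up run into constraints whose scopes follow the bags of the decomposition.
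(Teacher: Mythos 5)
The paper never proves this statement: it is imported verbatim by citation from \cite{KnopKMT:2017} (Theorem~25 there), with only a remark translating \MSOJ over $\sigma_2$-structures into \MSOD over graphs. So there is no in-paper argument to compare against, and your proposal must stand on its own --- which it does. Your route (annotate a balanced binary tree decomposition with the restriction of the candidate assignment, invoke the automaton/type-composition form of Courcelle's theorem to get a bottom-up automaton with $f(|\varphi|,\tau)$ states, materialize the run as fresh boolean state variables at each node, link levels by transition constraints whose scopes follow the bags, and force acceptance at the root) is the standard proof of exactly this kind of ``CSP extension'' statement, and it matches the essential content of the cited proof, which likewise composes \MSO types along a tree decomposition. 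Your verification of the extension property (feasible assignments of $J$ project exactly onto $\Feas(CSP_\varphi(G))$) and of both parameter bounds --- $\|\CC_J\| \leq f(|\varphi|,\tau)\cdot|V|$ from $O(|V|)$ nodes with constant-size relations, and $\tw(J) \leq f(|\varphi|,\tau)$ via the bags $B_i$ --- is correct: scopes lie inside single bags, original variables inherit connectedness from the decomposition of $G$, and each state block appears only in its own bag and its parent's.

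Two small points you gloss over, both standard but worth stating. First, for \MSOD the annotated tree must introduce each edge of $G$ at exactly one node (e.g.\ the highest bag containing both endpoints); ``each bag together with its induced edges,'' as written, lets an edge appear at several nodes, which would break type composition over the incidence structure. Second, determinism of the automaton is convenient but not needed: with a nondeterministic automaton, feasibility of $J$ is precisely the existence of an accepting run, which is precisely acceptance, so the projection argument goes through unchanged. Also, you need not assume $k$ is constant to absorb it into $f$: a formula with $k$ free variables has $|\varphi| \geq k$, so $k$ is already controlled by $|\varphi|$.
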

To be precise,~\cite[Theorem 25]{KnopKMT:2017} speaks of \MSOJ over $\sigma_2$-structures, which is equivalent to \MSOD over graphs; cf. the discussion in~\cite[Section 2.1]{KnopKMT:2017}.
\ignore{
\begin{theorem}[{Lemma 5.2.8~\cite{KouteckyDiss}}]
	Let $G$ be a graph with $\tw(G) = \tau$, $\varphi$ be an \MSOD-formula with $k$ free variables.
	Then $CSP_\varphi(G)$ has a CSP extension $J$ with $\tw(J) = 2$, $\|\CC_J\| \leq f(|\varphi|, \tau) \cdot |V(G)|$, and $D_J \leq f(|\varphi|, \tau)$.
\end{theorem}}
\section{Dynamic Program for CSP}
In this section we demonstrate that every CSP of bounded treewidth admits a dynamic program that satisfies the assumptions required in \cite{SLN17}.


Consider a CSP instance $J=(V,\CC)$ with a constraint graph $G=(V,E)$ of bounded treewidth. Let  $({\cal T}=(I,F), \{X_i|i\in I\})$  denote a balanced tree decomposition of $G$ (from Theorem~\ref{thm:treedecomp}).
In what follows, we denote the vertex set $V=[n]=\{1,2,\cdots, n\}$.
Let $\lambda$ be a symbol denoting an unassigned value. 
 For any  $W \subseteq V$,  define the \emph{set of
 configurations of} $W$ as:
\begin{align*}
\KK(W) = &\big\{(z_1, \dots, z_n) \in \{0,1,\lambda\}^V\ |\\
& \forall C_U\in \CC:  (U\subseteq W \implies z|_{U}\in C_U),\\ &
\forall i \not\in W: z_i = \lambda,~
\forall j\in W: z_j \in \{0,1\} \big\}
\end{align*}
Let $k \in \KK(W)$ be a configuration and $v\in V$.
Because $k$ is a vector, $k(v)$ refers to the $v$-th element of $k$.

\begin{define}[State Operations]
	Let $U,W\sse V$. Let $k\in \KK(U)$ and $p\in \KK(W)$. 
	\begin{itemize}
\setlength\itemsep{0pt}
		\item Configurations $k$ and $p$ are said to be {\em consistent} if, for each $v\in V$, either $k(v)=p(v)$ or at least one of $k(v), p(v)$ is $\lambda$.
		\item If configurations $k$ and $p$ are consistent, define $[p\cup k](v) = \begin{cases}
		p(v),\ &\mbox{if }k(v)=\lambda;\\
		k(v),\ &\mbox{otherwise.}
		\end{cases}$
		\item Define $[k\cap W](v)=\begin{cases}
		k(v),\ &\mbox{if }v\in W;\\
		\lambda,\ &\mbox{otherwise.}
		\end{cases}$
	\end{itemize}
\end{define}

We start by defining some useful parameters for the dynamic program.

\begin{define} \label{def:DP}
For each node $i\in I$ with children nodes $\{j,j'\}$, we associate the following:
	\begin{enumerate}
		\item  state space $\Sigma_i = \KK(X_i)$.
		\item for each  $\sigma\in\Sigma_i$, there is a collection of partial solutions
		$$\HH_{i,\sigma}  := \{k \in \KK(V_i) \mid k \cap X_i = \sigma\}  .$$
		\item for each  $\sigma\in\Sigma_i$, there is a collection  of valid combinations of children states
		\begin{align*}
		\FF_{i, \sigma} = &\{(\sigma_j, \sigma_{j'}) \in \Sigma_j \times \Sigma_{j'} \mid (\sigma_j \cap X_i) =\\& (\sigma \cap X_j) \text{ and } (\sigma_{j'} \cap X_i) = (\sigma\cap X_{j'}) \}.
		\end{align*}
	\end{enumerate}
\end{define}
In words, (a) $\Sigma_i$ is just the set of configurations for the vertices $X_i$ in node $i$, (b)
$\HH_{i, \sigma}$ are those configurations for the vertices $V_i$ (in the subtree rooted at $i$) that are  consistent with $\sigma$, (c) $\FF_{i, \sigma}$ are those pairs of states at the children $\{j,j' \}$ that agree with $\sigma$ on the intersections $X_i\cap X_j$ and $X_i\cap X_{j'}$ respectively.

\begin{theorem}[{\small Dynamic Program for CSP}]\label{thm:DP}
	Let $({\cal T}=(I,F), \{X_i|i\in I\})$ be a tree decomposition of a CSP instance $(V,\CC)$ of bounded treewidth.
	Then $\Sigma_i$, $\FF_{i,\sigma}$ and $\HH_{i,\sigma}$ from Definition~\ref{def:DP} satisfy the conditions:
	\begin{enumerate}
		\item \label{asm:DP:state}\emph{(bounded state space)} $\Sigma_i$ and ${\FF}_{i,\sigma}$ are all bounded by constant, that is, $\max_i{|\Sigma_i|}=O(1)$ and $\max_{i,\sigma}{|{\FF}_{i,\sigma}|}=O(1)$.
		\item\label{asm:DP:req} \emph{(required state)} For each $i\in I$ and $\sigma\in\Sigma_i$, the intersection with $X_i$ of every vector in $\HH_{i,\sigma}$ is the same, in particular
		$h\cap X_i = \sigma$  for all $h\in \HH_{i,\sigma}$.
		\item \label{asm:DP:leaf} By condition~2, for any leaf $\ell\in I$ and $\sigma\in\Sigma_\ell$, we have $\HH_{\ell,\sigma}=\{\sigma\}$ or $\emptyset$.
		\item \label{asm:DP:valid} \emph{(subproblem)} For each non-leaf node $i\in I$ with children $\{j,j'\}$ and $\sigma\in\Sigma_i$,
		\begin{align*}
		\HH_{i,\sigma}= &\Big\{\sigma\cup h_j\cup h_{j'} \mid h_j\in\HH_{j,w_j},\\&\ h_{j'} \in\HH_{j',w_{j'}}, \, (w_j,w_{j'}) \in{\FF}_{i,\sigma}\Big\}.
		\end{align*}
		\item \label{asm:DP:root} \emph{(feasible subsets)} At the root node $r$, we have $\Feas(V,\CC) =\bigcup_{\sigma\in\Sigma_r}{\HH_{r,\sigma}}$.
	\end{enumerate}
\end{theorem}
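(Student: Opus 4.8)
The plan is to establish the five items essentially directly from the definitions, using only a handful of standard structural properties of the rooted tree decomposition; items~\ref{asm:DP:state}, \ref{asm:DP:req}, \ref{asm:DP:leaf} and \ref{asm:DP:root} are short, while item~\ref{asm:DP:valid} is the technical core. First I would record the facts on which everything rests. For a non-leaf node $i$ with children $j,j'$: (i) $V_i = X_i \cup V_j \cup V_{j'}$; (ii) $V_j \cap X_i = X_j \cap X_i$ and $V_{j'} \cap X_i = X_{j'} \cap X_i$; (iii) $V_j \cap V_{j'} \subseteq X_j \cap X_{j'} \subseteq X_i$; and (iv) since in the constraint graph every set $U$ with $C_U \in \CC$ is a clique, each such $U$ is contained in a single bag $X_m$. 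Facts (ii) and (iii) follow from the connectivity requirement of the tree decomposition (the bags containing any fixed vertex form a connected subtree, so a vertex shared across subtrees must lie on the separating bags), and (iv) is the standard clique-containment lemma.

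Given these, the easy items go as follows. For item~\ref{asm:DP:state}, bounded treewidth gives $|X_i| = O(1)$, hence $|\Sigma_i| = |\KK(X_i)| \le 2^{|X_i|} = O(1)$ and $|\FF_{i,\sigma}| \le |\Sigma_j|\cdot|\Sigma_{j'}| = O(1)$. Item~\ref{asm:DP:req} is immediate, since $h \cap X_i = \sigma$ is built into the definition of $\HH_{i,\sigma}$. Item~\ref{asm:DP:leaf} then follows: at a leaf $\ell$ we have $V_\ell = X_\ell$, so any $h \in \HH_{\ell,\sigma} \subseteq \KK(X_\ell)$ satisfies $h = h \cap X_\ell = \sigma$, forcing $\HH_{\ell,\sigma} \subseteq \{\sigma\}$. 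For item~\ref{asm:DP:root}, at the root $r$ we have $V_r = V$, so $\KK(V_r) = \Feas(V,\CC)$; each feasible $z$ satisfies $z \cap X_r \in \Sigma_r$, whence $\bigcup_{\sigma \in \Sigma_r} \HH_{r,\sigma} = \KK(V_r) = \Feas(V,\CC)$.

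For item~\ref{asm:DP:valid} I would prove the two inclusions of the claimed set equality. For ``$\subseteq$'', given $k \in \HH_{i,\sigma}$ I set $h_j := k \cap V_j$, $h_{j'} := k \cap V_{j'}$, $w_j := k \cap X_j$, and $w_{j'} := k \cap X_{j'}$; using fact (ii) and $k \in \KK(V_i)$ one checks $h_j \in \HH_{j,w_j}$, $h_{j'} \in \HH_{j',w_{j'}}$, and $(w_j,w_{j'}) \in \FF_{i,\sigma}$, while fact (i) gives $k = \sigma \cup h_j \cup h_{j'}$. For ``$\supseteq$'', given $h_j \in \HH_{j,w_j}$, $h_{j'} \in \HH_{j',w_{j'}}$ with $(w_j,w_{j'}) \in \FF_{i,\sigma}$, I first verify that $\sigma, h_j, h_{j'}$ are pairwise consistent: on the overlaps $X_i \cap X_j$, $X_i \cap X_{j'}$, and $V_j \cap V_{j'}$ (contained in $X_i \cap X_j \cap X_{j'}$ by fact (iii)) all three pieces agree with $\sigma$, because of the agreement conditions defining $\FF_{i,\sigma}$. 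Hence $k := \sigma \cup h_j \cup h_{j'}$ is well defined, is non-$\lambda$ exactly on $V_i$ by fact (i), and satisfies $k \cap X_i = \sigma$.

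The main obstacle is the remaining part of ``$\supseteq$'': showing $k \in \KK(V_i)$, i.e., that $k|_U \in C_U$ for every constraint with $U \subseteq V_i$. Here I would invoke fact (iv) together with a short case analysis on the location of the bag $X_m \supseteq U$. If $m$ lies in the subtree of $j$ then $U \subseteq V_j$; if in the subtree of $j'$ then $U \subseteq V_{j'}$; and otherwise (using that every vertex of $U$ also lies in $V_i$ and that the bags containing a fixed vertex are connected) every vertex of $U$ must appear in $X_i$, so $U \subseteq X_i$. In the first case $k|_U = h_j|_U \in C_U$ since $h_j \in \KK(V_j)$; symmetrically in the second; and in the third $k|_U = \sigma|_U \in C_U$ since $\sigma \in \KK(X_i)$. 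Establishing cleanly that a clique inside $V_i$ must be confined to a single child's vertex set or to $X_i$ is exactly where the separator structure of the tree decomposition is used, and it is the step I expect to require the most care.
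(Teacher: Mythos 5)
Your proposal is correct and takes essentially the same approach as the paper: items 1, 2, 3 and 5 read off the definitions, and item 4 is proved via the two inclusions, using the separator property $V_j \cap V_{j'} \subseteq X_i \cap X_j \cap X_{j'}$ for consistency and the clique-containment lemma with a case analysis on the location of the bag containing a constraint set for feasibility. If anything, your explicit treatment of the case where that bag lies outside the subtree of $i$ (forcing $U \subseteq X_i$) is spelled out more carefully than in the paper, which simply asserts the bag can be found among the descendants of $i$.
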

\begin{proof}
Let $q=O(1)$ denote the treewidth of ${\cal T}$. We now prove each of the claimed properties.

\smallskip\noindent\emph{Bounded state space.} Because $|X_i| \leq q+1$, we have $|\Sigma_i| = |\KK(X_i)| \leq 3^{q+1} = O(1)$ and $|\FF_{i, \sigma}| \leq |\Sigma_{j} \times \Sigma_{j'}| \leq (3^{q+1})^2= O(1)$.

\smallskip\noindent\emph{Required state.} This holds immediately by definition of $\HH_{i, \sigma}$ in Definition~\ref{def:DP}.

\smallskip\noindent\emph{Subproblem.} We first prove the ``$\subseteq$'' inclusion of the statement.
Consider any $h \in \HH_{i, \sigma} \subseteq \KK(V_i)$. Let $h_j = h \cap V_j$, $w_j = h \cap X_j$ and analogously for $j'$.
Observe that for $U \subset W \subseteq V$ we have that $k \in \KK(W) \implies k \cap U \in \KK(U)$.
By this observation, $h_j \in \KK(V_j)$.
Moreover, $h_j\cap X_j = h\cap X_j = w_j$, which implies  $h_j \in \HH_{j, w_j}$.
Again, the same applies for $j'$ and we have $h_{j'} \in \HH_{j', w_{j'}}$. Finally, note that $w_j\cap X_i = h\cap X_j \cap X_i = (h \cap X_i) \cap X_j = \sigma\cap X_j$ and similarly $w_{j'}\cap X_i = \sigma\cap X_{j'}$. So we have $(w_j,w_{j'}) \in  \FF_{i, \sigma}$.

Now, we prove the
``$\supseteq$'' inclusion of the statement.
Consider any two partial solutions $h_j \in \HH_{j, w_j}$ and $h_{j'} \in  \HH_{j', w_{j'}}$ with $(w_j, w_{j'}) \in \FF_{i, \sigma}$. Note that $h_j$ and $\sigma$ (similarly $h_{j'}$ and $\sigma$) are consistent by definition of $\FF_{i, \sigma}$. We now claim that $h_{j}$ and $h_{j'}$  are also consistent: take any $v\in V$
with both $h_{j'}(v),  h_{j'}(v)  \ne \lambda$, then we must have $v\in V_j\cap V_{j'} \sse X_i\cap X_j\cap X_{j'}$ as $X_i$ is a vertex separator, and so $h_j(v) = \sigma(v) = h_{j'}(v)$ by definition of $\FF_{i, \sigma}$. Because $\sigma$, $h_j$ and $h_{j'}$ are mutually consistent, $h=\sigma\cup h_j\cup h_{j'} $ is well-defined. It is clear from the above arguments that $h\cap X_i = \sigma$. In order to show $h\in
\HH_{i, \sigma} $ we now only need  $h \in \KK(V_i)$, that is, $h$ does not violate any constraint that is contained  in $V_i$.
For contradiction assume that that there is such a violated constraint $C_S$ with
$S \subseteq V_i$.
Then $S$ induces a clique in the constraint graph $G$ and thus there must exist a node $k$ among the descendants of $i$ such that $S \subseteq V_k$.
But $k$ cannot be in the subtree rooted in $j$  or $j'$, because then $C_S$ would have been violated already in $h_j$ or $h_{j'}$, and also it cannot be that $i=k$, because then $C_S$ would be violated in $\sigma$, a contradiction.

\smallskip
\noindent\emph{Feasible subsets.} Clearly, the set $\Feas(V,\CC)$  of feasible CSP solutions  is equal to $\KK(V)$.
Because  $\HH_{r, \sigma}$ is  those $k \in \KK(V)$ with $k \cap X_r = \sigma$,  the claim follows.
\end{proof}

We note that Theorem~\ref{thm:DP} proves Assumption~1 in \cite{SLN17}. To clarify the comparison, Assumption~1 is:

\begin{assume}[Assumption 1 in~\cite{SLN17}]\label{assm:DP}
	Let $({\cal T}=(I,F), \{X_i|i\in I\})$ be any  tree decomposition. Then there exist $\Sigma_i$, ${\cal F}_{i,\sigma}$ and $\HH_{i,\sigma}$ (see Definition~\ref{def:DP}) that satisfy the following conditions:
	\begin{enumerate}
		\item \emph{(bounded state space)} $\Sigma_i$ and ${\cal F}_{i,\sigma}$ are all bounded by constant, that is, $\max_i{|\Sigma_i|}=O(1)$ and $\max_{i,\sigma}{|{\cal F}_{i,\sigma}|}=O(1)$.
		\item \emph{(required state)} For each $i\in I$ and $\sigma\in\Sigma_i$, the intersection with $X_i$ of every set in $\HH_{i,\sigma}$ is the same, denoted $X_{i,\sigma}$, that is $S\cap X_i=X_{i,\sigma}$ for all $S\in\HH_{i,\sigma}$.
		\item By condition~2, for any leaf $\ell\in I$ and $\sigma\in\Sigma_\ell$, we have $\HH_{\ell,\sigma}=\{X_{\ell,\sigma}\}$ or $\emptyset$.
		\item  \emph{(subproblem)} For each non-leaf node $i\in I$ with children $\{j,j'\}$ and $\sigma\in\Sigma_i$,
		\begin{align*}
		\HH_{i,\sigma}\quad =\quad &\Big\{X_{i,\sigma}\cup S_j\cup S_{j'} \,\,:\,\,S_j\in\HH_{j,w_j},\\
		&S_{j'} \in\HH_{j',w_{j'}}, \, (w_j,w_{j'}) \in{\cal F}_{i,\sigma}\Big\}.
		\end{align*}
		\item  \emph{(feasible subsets)} At the root node $r$, we have $\gc =\bigcup_{\sigma\in\Sigma_r}{\HH_{r,\sigma}}$.
	\end{enumerate}
\end{assume}
Assumption~1 is used in the main result of~\cite{SLN17},  which is restated below.
\begin{theorem}[Theorem 4 in~\cite{SLN17}]\label{thm:sym}
	Consider any instance of the \gcmc problem on a bounded-treewidth graph $G$. 
If the graph constraint \gc satisfies Assumption~\ref{assm:DP} then we obtain a $\frac12$-approximation algorithm.
\end{theorem}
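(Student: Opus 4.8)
The plan is to treat the data $(\Sigma_i,\FF_{i,\sigma},\HH_{i,\sigma})$ guaranteed by Assumption~\ref{assm:DP} as a black box for \emph{linear} optimization over $\gc$, and then to reduce the quadratic max-cut objective to linear optimization at the cost of a factor $\frac12$. (Since this is precisely the main theorem of~\cite{SLN17}, one may simply cite it; below I outline the argument I would reconstruct.)

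\smallskip\noindent\emph{From the DP to a linear oracle.} First I would upgrade the feasibility structure into a routine that, for any vertex-separable reward $w:V\times[k]\to\R_{+}$, computes $\max_{h\in\gc}\sum_{v\in V}w(v,h(v))$ in $O(n)$ time. Process ${\cal T}$ bottom-up, storing at each node $i$ and state $\sigma\in\Sigma_i$ the quantity $A[i,\sigma]=\max\{\sum_{v\in V_i}w(v,h(v)):h\in\HH_{i,\sigma}\}$. At a leaf this is read off from the singleton description of $\HH_{\ell,\sigma}$; at an internal node the subproblem identity expresses $\HH_{i,\sigma}$ through the children, so $A[i,\sigma]$ equals the reward of $\sigma$ on $X_i$ plus $\max_{(w_j,w_{j'})\in\FF_{i,\sigma}}(A[j,w_j]+A[j',w_{j'}])$, taking care to charge each vertex exactly once. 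The feasible-subsets property then yields $\max_{h\in\gc}$ at the root, and the bounded state space keeps every table polynomial. This part is routine.

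\smallskip\noindent\emph{Reducing the cut to linear optimization.} The difficulty is that the cut is \emph{quadratic} and cannot be fed to this oracle directly, and --- unlike unconstrained max cut --- one cannot simply aim for half the total edge weight, since $\gc$ may force every cut to be small (in which case $\opt$ is small as well). Hence any bound must be certified against $\opt$. I would pass to the assignment polytope $P:=\mathrm{conv}\{\mathbf{1}_h:h\in\gc\}$, over which the linear routine is an optimization oracle and therefore (by the equivalence of optimization and separation) a separation oracle. Using $P$ together with auxiliary variables $z_{uv}$ for the indicator that edge $\{u,v\}$ is cut, I would solve a convex relaxation whose optimum $\mathsf{LP}$ satisfies $\mathsf{LP}\ge\opt$, obtain fractional marginals, and round them back to an integral $h\in\gc$.

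\smallskip\noindent\emph{The main obstacle.} The crux is precisely this rounding, which must simultaneously (i) return a point of $\gc$ and (ii) lose only the factor $\frac12$. Requirement~(i) is what forces one to round \emph{through} the dynamic program rather than by an oblivious hyperplane or threshold rule; requirement~(ii) is the classical ``random cut'' barrier, since a uniformly random assignment already cuts each edge with probability $\frac{k-1}{k}\ge\frac12$. Concretely I would try to derandomize a carefully chosen random feasible completion by the method of conditional expectations along ${\cal T}$, committing at each node to a pair $(w_j,w_{j'})\in\FF_{i,\sigma}$ that does not decrease the conditional expected cut; the correctness of each local decision would rest on the subproblem identity and feasibility of the final assignment on the feasible-subsets property. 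The genuinely delicate point, which I expect to absorb most of the work, is that the expected-cut bookkeeping must be made consistent across the bounded-size separators $X_i$ so that an edge whose endpoints are decided in different parts of the tree is accounted for exactly once and the factor $\frac12$ is not eroded; establishing that a feasible solution meeting the per-edge random-cut guarantee can actually be \emph{extracted} from the relaxation is the heart of the theorem.
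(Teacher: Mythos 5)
Your high-level architecture (an LP relaxation built on the DP of Assumption~\ref{assm:DP}, followed by a rounding that walks the tree decomposition) matches the intended proof, but there is a genuine gap at exactly the point you flag as ``the heart of the theorem,'' and the relaxation you propose cannot support the rounding you sketch. The relaxation actually used (the paper invokes Theorem~\ref{thm:sym} as a black box from~\cite{SLN17}, but reproduces and generalizes its proof machinery in Section~\ref{sec:maxcut-k}; see Figure~\ref{fig:LP}) is \emph{not} the assignment polytope $P=\mathrm{conv}\{\mathbf{1}_h : h\in\gc\}$ with auxiliary edge variables. Its variables $y(s[N])$ encode \emph{joint distributions of DP states} over every node set $N$ contained in $T_{\ell_1}\cup T_{\ell_2}$ for a pair of leaves, with marginalization constraints~\eqref{cons:SA}, support constraints~\eqref{cons:DP}--\eqref{cons:DP-leaf} coming from $\FF_{i,\sigma}$ and $\HH_{\ell,\sigma}$, and -- crucially -- the edge variables defined \emph{through} these joint state variables in~\eqref{cons:obj}. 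Since the cut is quadratic, this lift is the only thing tying $z$ to the solution at all; in your proposal it is never specified how $z_{uv}$ is constrained against points of $P$, and marginals over $P$ (even with a separation oracle obtained from your linear-optimization routine) carry no joint information about the two endpoints. Consequently the ``carefully chosen random feasible completion'' whose conditional expectations you would track is never constructed: the paper's completion samples children states top-down from the conditionals $y(s[T_i\cup\{j,j'\}])/y(s[T_i])$, which exist only because the lifted $y$-variables do, and feasibility of the output (Lemma~\ref{lem:alg-feasible}) rests on constraints~\eqref{cons:DP}--\eqref{cons:DP-leaf} holding with probability one under that sampling.

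Second, the factor $\frac12$ has a different mechanism from the ``random cut barrier'' you invoke. In the actual analysis the per-edge loss is zero when the highest node of one endpoint lies in the path-set $T$ of the other's highest node (Lemma~\ref{lem:sym-alg-cut1}: the cut probability is exactly $z_{uv\alpha}$, via Claim~\ref{cl:path}), and a factor of $2$ is lost only in the incomparable case (Lemma~\ref{lem:sym-alg-cut2}): there, conditioned on the states of the common ancestors, the two endpoints' states are sampled \emph{independently} in disjoint subtrees, and replacing a joint distribution of two $\{0,1\}$ random variables by the product of its marginals costs at most a factor $2$ (Observation~\ref{obs:joint_indep}). This conditional-independence-versus-joint-distribution argument is the key lemma missing from your outline; a conditional-expectations derandomization could be layered on top of it, but cannot substitute for it, because without the lifted $y$-variables there is no well-defined expectation to condition on. (The balanced, $O(\log n)$-depth decomposition of Theorem~\ref{thm:treedecomp} is also needed so that the lifted LP has polynomial size, a point your ellipsoid-based route obscures.)
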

We will use this result 
in Section~\ref{sec:maxcut}, but we will modify its proof slightly in Section~\ref{sec:maxcut-k} for max-$k$-cut.

 \section{The Max-Cut Setting}\label{sec:maxcut}
Here, we consider the \gcmc problem when $k=2$ and there is a constraint \gc for only one side of the cut. We show that the above dynamic-program structure can be combined with  \cite{SLN17} to obtain a $\frac12$-approximation algorithm. 

 Formally, there is an \MSO formula $\varphi$ with one free variable defined on graph $G=(V,E)$ of bounded treewidth. The feasible vertex subsets $\gc$ are  those $S\sse V$ that satisfy $\varphi$. There is also a symmetric weight function $\ew:{V\choose 2} \rightarrow \R_+$. We are interested in the following problem ($\gcmc_I$).
\begin{equation}\label{eq:gcmc2-defn}
\max_{S\in \gc} \quad \sum_{u\in S,\, v\not\in S} \ew(u,v).
\end{equation}
We note that this is precisely the setting of \cite{SLN17}.
\begin{theorem}\label{thm:k=2}
There is a $\frac12$-approximation algorithm for $\gcmc_I$ when the constraint \gc is given by any \MSO formula on a bounded-treewidth graph.
\end{theorem}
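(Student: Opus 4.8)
The plan is to prove Theorem~\ref{thm:k=2} by chaining together the three ingredients already established: the \MSO-to-CSP reduction of Theorem~\ref{thm:MSO_CSP}, the generic dynamic program of Theorem~\ref{thm:DP}, and the approximation result of Theorem~\ref{thm:sym}. Concretely, I would rewrite the \MSO constraint $\gc$ as the feasible set of a bounded-treewidth CSP, run the dynamic program on its constraint graph to satisfy Assumption~\ref{assm:DP}, transplant the max-cut objective onto that constraint graph, and then invoke Theorem~\ref{thm:sym}.

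First I would encode $\gc$ as a CSP. Since $\varphi$ has a single free variable, $CSP_\varphi(G)$ (Definition~\ref{def:CSP_G}) has a variable set $N$ that I identify with $V$ (the variable for $v$ records whether $v\in S$), and its feasible assignments are exactly the indicators of the sets in $\gc$. Applying Theorem~\ref{thm:MSO_CSP} yields a CSP extension $J=(N',\CC')$ with $N\sse N'$, treewidth $\tw(J)=O(1)$, and $\|\CC'\|=O(|V|)$. By the definition of CSP extension, $\gc=\{z|_{N}\mid z\in\Feas(J)\}$, so every $S\in\gc$ lifts to at least one feasible assignment of $J$, and every feasible assignment of $J$ projects onto a set of $\gc$.

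Next I would run the dynamic program on $J$. Let $H=G(J)$ be the (bounded-treewidth) constraint graph and compute a balanced tree decomposition of $H$ via Theorem~\ref{thm:treedecomp}. Theorem~\ref{thm:DP} then supplies state spaces $\Sigma_i$, valid combinations $\FF_{i,\sigma}$, and partial solutions $\HH_{i,\sigma}$ meeting all five conditions of Assumption~\ref{assm:DP}, with $\Feas(J)=\bigcup_{\sigma\in\Sigma_r}\HH_{r,\sigma}$ at the root. Hence $\Feas(J)$, viewed as a collection of vertex subsets of $H$, is a graph constraint on $H$ satisfying Assumption~\ref{assm:DP}. To put the objective on $H$, I would extend the weights to $\tilde{\ew}:\binom{N'}{2}\to\R_+$ by setting $\tilde{\ew}(u,v)=\ew(u,v)$ for $u,v\in V$ and $\tilde{\ew}(u,v)=0$ otherwise. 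For any $z\in\Feas(J)$, the $\tilde{\ew}$-weight of edges crossing the partition $(z^{-1}(1),z^{-1}(0))$ of $N'$ then equals the $\gcmc_I$ objective of $S=z|_{N}$, because the auxiliary variables in $N'\setminus N$ contribute zero.

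The last step is to confirm that solving this $\gcmc$ instance $(H,\tilde{\ew},\Feas(J))$ solves $\gcmc_I$, and here is where the only genuine care is needed. Since the objective is independent of the auxiliary coordinates and every $S\in\gc$ lifts to some $z\in\Feas(J)$, the optimum of the $H$-instance coincides with \opt\ for $\gcmc_I$; I would verify that zero-weighting the auxiliary variables neither changes the objective value of any lift nor obstructs the projection $z\mapsto z|_{N}$. Applying Theorem~\ref{thm:sym} to $(H,\tilde{\ew},\Feas(J))$ then returns a feasible $z\in\Feas(J)$ of objective at least $\tfrac12\opt$, and restricting to $S=z|_{N}\in\gc$ yields the claimed $\tfrac12$-approximation. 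The main obstacle is thus not any single hard estimate but the bookkeeping around the auxiliary variables—keeping the objective on the original edges while carrying the constraint on the enlarged variable set—whereas polynomial running time follows immediately from $\|\CC'\|=O(|V|)$ together with the $O(1)$ bounds on $|\Sigma_i|$ and $|\FF_{i,\sigma}|$.
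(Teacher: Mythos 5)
Your proposal is correct and follows essentially the same route as the paper's own proof: encode $\gc$ as $CSP_\varphi(G)$, apply Theorem~\ref{thm:MSO_CSP} to get a bounded-treewidth CSP extension, extend the weights by zero to the auxiliary variables, and invoke Theorem~\ref{thm:DP} together with Theorem~\ref{thm:sym} as a black box. The only difference is that you spell out the lift/projection bookkeeping (that optima of the extended and original instances coincide), which the paper leaves implicit.
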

\begin{proof}
The proof uses Theorem~\ref{thm:sym} from \cite{SLN17} as a black-box.  
Note that the constraint \gc corresponds to feasible assignments to $CSP_\varphi(G)$ as in Definition~\ref{def:CSP_G}.
Consider the CSP extension $\vartheta$ obtained after applying Theorem~\ref{thm:MSO_CSP} to $CSP_\varphi(G)$.  Then $\vartheta$ has variables $V'\supseteq V$ and bounded treewidth. We obtain an extended  weight function $\ew:{V' \choose 2} \rightarrow \R_+$  from $\ew$  by setting $\ew'(u,v) = \ew(u,v)$ if $u,v\in V$ and $\ew'(u,v)=0$ otherwise. We now consider a new instance of $\gcmc_I$ on vertices $V'$ and constraint $\vartheta$. Due to the bounded-treewidth property of $\vartheta$, we can apply
Theorem~\ref{thm:DP} which proves that Assumption~\ref{assm:DP} is satisfied by the dynamic program in Definition~\ref{def:DP}. Combined with Theorem~\ref{thm:sym}, we obtain the claimed result.
\end{proof}

\section{The Max-$k$-Cut Setting}\label{sec:maxcut-k}
In this section, we generalize the setting to any constant $k$, i.e. problem \eqref{eq:gcmc-defn}. Recall the  formal definition from \S\ref{sec:prelim}. Here the graph property \gc is expressed as an \MSO formula with $k$ free variables on graph $G$. Our main result is the following:

\begin{theorem}\label{thm:gen-k}
There is a $\frac12$-approximation algorithm for any $\gcmc$ instance with  constant $k$ when the constraint \gc is given by any  \MSO formula on a bounded-treewidth graph.
\end{theorem}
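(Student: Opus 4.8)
The plan is to follow the same reduction pipeline as in the proof of Theorem~\ref{thm:k=2}, and then re-run the rounding argument behind Theorem~\ref{thm:sym} after generalizing it from a single subset to a $k$-partition. First I would encode the $k$-part constraint as a CSP: since \gc is given by an \MSO formula $\varphi$ with $k$ free variables, Definition~\ref{def:CSP_G} identifies \gc with $\Feas(CSP_\varphi(G))$, whose boolean variables $N=\{t((v,\alpha))\mid v\in V,\ \alpha\in[k]\}$ record the part-membership of each vertex, with the requirement that the $\{U_\alpha\}$ partition $V$ folded into $\varphi$. Applying Theorem~\ref{thm:MSO_CSP} to $CSP_\varphi(G)$ yields an extension $\vartheta=(V',\CC)$ with $V'\supseteq N$, bounded treewidth $\tw(\vartheta)\le f(|\varphi|,\tau)$, and $\|\CC\|\le f(|\varphi|,\tau)\cdot|V|$; since $k$ is constant, $|\varphi|$ is constant and this treewidth is $O(1)$. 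Theorem~\ref{thm:DP}, applied on a balanced tree decomposition of $\vartheta$ (Theorem~\ref{thm:treedecomp}), then certifies that the dynamic program of Definition~\ref{def:DP} satisfies every condition of Assumption~\ref{assm:DP} over the variable set $V'$.

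Next I would set up the objective on the extended ground set. As in Theorem~\ref{thm:k=2}, extend \ew to $\ew':{V'\choose 2}\to\R_+$ by $\ew'(u,v)=\ew(u,v)$ when $u,v\in V$ and $\ew'(u,v)=0$ otherwise, so the auxiliary CSP variables contribute nothing to the cut. The feature specific to the $k$-cut setting is that a vertex $v$ of $G$ is now represented by the block $\{t((v,\alpha))\}_{\alpha\in[k]}$, and the indicator ``$h(u)\ne h(v)$'' is a function of the two blocks for $u$ and $v$; equivalently, a pair is \emph{not} cut exactly when $u$ and $v$ agree on the (unique) part in which their membership bit equals $1$. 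I would read off this cut contribution from the DP states $\sigma\in\Sigma_i$, which already expose the part-membership bits of every vertex in the bag $X_i$.

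The heart of the argument is to adapt the proof of Theorem~\ref{thm:sym} from \cite{SLN17}, which delivers a $\frac12$-approximation for the one-sided $k=2$ cut, to the $k$-partition objective. I would keep the same skeleton --- a relaxation indexed by the DP states with the flow and consistency constraints supplied by the subproblem and feasible-subsets conditions of Assumption~\ref{assm:DP}, followed by a rounding that proceeds down the balanced tree and commits at each node to a state consistent with the one already chosen at its parent --- and change only the per-pair accounting. For each weighted pair $\{u,v\}$ the contribution to the rounded cut is $\ew'(u,v)\cdot\Pr[h(u)\ne h(v)]$, and the goal of the modification is to show that this separation probability is at least half of the corresponding term in the relaxation, and in particular at least $\tfrac12$ of the weight of any pair that the optimal partition separates. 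Here the symmetry among the $k$ parts works in our favour: a symmetric random assignment separates a genuinely cut pair with probability $\tfrac{k-1}{k}\ge\tfrac12$, so the bound that \cite{SLN17} established for $k=2$ only improves for larger $k$.

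I expect the main obstacle to be exactly this per-pair separation bound in the rounding, since the $k$-cut objective is quadratic in the part-membership variables while Theorem~\ref{thm:DP} only guarantees consistent \emph{feasibility}: one must verify that the distribution produced by the tree rounding separates each pair with probability at least $\tfrac12$ even when $u$ and $v$ do not co-occur in a single bag (which can happen once \ew is supported on non-edges of $G$), and that the derandomization through the DP states preserves this guarantee while keeping the output inside \gc. Once this lemma is in place, summing over all pairs produces a feasible $k$-partition of cut value at least $\tfrac12\,\opt$, and combining the reduction above with the modified Theorem~\ref{thm:sym} yields the claimed $\frac12$-approximation.
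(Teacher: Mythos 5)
Your reduction pipeline (encoding \gc as $CSP_\varphi(G)$ via Definition~\ref{def:CSP_G}, passing to the bounded-treewidth extension $\vartheta$ via Theorem~\ref{thm:MSO_CSP}, and running the dynamic program of Theorem~\ref{thm:DP} on a balanced tree decomposition) matches the paper exactly, and your skeleton --- an LP indexed by DP states plus a top-down tree rounding, i.e., a modification of the proof of Theorem~\ref{thm:sym} rather than a black-box use of it --- is also what the paper does. But the heart of the proof, the per-pair separation bound, is left unproven, and the one justification you offer for it is invalid. You argue that ``a symmetric random assignment separates a genuinely cut pair with probability $\frac{k-1}{k}\ge\frac12$, so the bound for $k=2$ only improves for larger $k$.'' The rounding here is not a symmetric random assignment: the distribution over states is dictated by the LP solution and is constrained by feasibility of \gc (the parts cannot be permuted, since each $U_\alpha$ must satisfy its own \MSO constraint $\varphi_\alpha$), so the $\frac{k-1}{k}$ bound for the unconstrained uniform random $k$-cut has no bearing on the analysis. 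Indeed no improvement with $k$ materializes in the correct argument --- the factor stays exactly $\frac12$. You then explicitly flag the separation bound as ``the main obstacle'' to be verified; but that verification is precisely the new content of the theorem, so as written the proposal is a plan, not a proof.

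Concretely, what is missing (and what Section~\ref{sec:maxcut-k} supplies) is a \emph{per-part} decomposition of the cut event. The event $h(u)\neq h(v)$ involves the $2k$ nodes $\{\overline{u\alpha},\overline{v\alpha}\}_{\alpha\in[k]}$ of the tree decomposition of $\vartheta$, and these in general do not lie in any single set of ${\cal P}$, so the LP carries no joint distribution over them and the event cannot be priced directly; by contrast, each pair $\{\overline{u\alpha},\overline{v\alpha}\}$ does lie in ${\cal P}$. The paper therefore introduces per-part variables $z_{uv\alpha}$ via constraint~\eqref{cons:obj}, with objective $\frac12\sum_{\{u,v\}}\ew(u,v)\sum_\alpha z_{uv\alpha}$, proves validity through Claim~\ref{cl:DP-sol-state} and Lemma~\ref{lem:LP-LB}, and then bounds each part separately: $\Pr[|\{u,v\}\cap U_\alpha|=1]$ equals $z_{uv\alpha}$ when one of $\overline{u\alpha},\overline{v\alpha}$ is an ancestor of the other (Lemma~\ref{lem:sym-alg-cut1}), and is at least $z_{uv\alpha}/2$ otherwise, using the conditional independence of the two subtrees below their least common ancestor together with Observation~\ref{obs:joint_indep} (Lemma~\ref{lem:sym-alg-cut2}). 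Finally, because feasibility forces a genuine partition, a cut edge is cut by exactly two parts, so $\Pr[h(u)\neq h(v)]=\frac12\sum_\alpha\Pr[|\{u,v\}\cap U_\alpha|=1]\ge\frac14\sum_\alpha z_{uv\alpha}$, which is half the LP term (Lemma~\ref{lem:alg-cut}); feasibility of the rounded partition itself needs Lemma~\ref{lem:alg-feasible}, projecting the rounded CSP solution back through Theorem~\ref{thm:MSO_CSP}. None of this machinery appears in your proposal, and your intended single-event accounting with a symmetry argument would not produce it.
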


\begin{remark} \label{rem:pc}
The complexity of Theorem~\ref{thm:gen-k} in terms of the treewidth $\tau$, length $|\varphi|$ of $\varphi$, depth $d$ of a tree decomposition of $G$, and maximum degree $r$ of a tree decomposition of $G$, is $s^{dr}$, where $s$ is the number of states of the dynamic program, namely $f(|\varphi|, \tau)$ for $f$ from Theorem~\ref{thm:MSO_CSP}.
From the perspective of parameterized complexity~\cite{DF} our algorithm is an \emph{XP algorithm} parameterized by $\tau$, i.e., it has runtime $n^{g(\tau)}$ for some computable function $g$.
\end{remark}

Let $G=(V,E)$ be the input graph (assumed to have bounded treewidth) and $\varphi$ be any \MSO formula with $k$ free variables.  Recall the CSP instance $CSP_\varphi(G)$ on variables $\{y({v,\alpha}) : v\in V, \alpha\in [k]\}$ from  Definition~\ref{def:CSP_G}. Feasible solutions to $CSP_\varphi(G)$ correspond to feasible $k$-partitions in \gc. Now consider the CSP extension $\vartheta$ obtained after applying Theorem~\ref{thm:MSO_CSP} to $CSP_\varphi(G)$.  Note that $\vartheta$ is defined on variables $V'\supseteq \{ (v,\alpha) : v\in V, \alpha\in [k] \}$ and has bounded treewidth. Let ${\cal T}$ denote the tree decomposition for $\vartheta$. Below we utilize the dynamic program from Definition~\ref{def:DP} applied to $\vartheta$: recall the quantities $\Sigma_i$, $\FF_{i, \sigma}$ etc. We will also refer to the variables in $V'$ as vertices, especially when referring to the tree decomposition ${\cal T}$; note that these are different from the vertices $V$ in the original graph $G$. 

\ignore{

We will define an extended graph with vertices $(v,\alpha)$, $\forall \alpha\in[k]$, $\forall v\in V$. For each $(u,v)\in E$, add $((v,\alpha),(u,\beta))$ to the extended graph for all $\alpha,\beta\in [k]$. Also add $((v,i),(v,j))$ for all $v\in V, \alpha,\beta\in [k], \alpha\ne\beta$ to the extended graph. Let $G'=(V',E')$ be the extended graph.  Because $k$ is constant, $\tw (G)=O(1)$. Now we have the vector $y\in\{0,1\}^{V\times [k]}$. $y(({v,\alpha}))$ takes value of 1 if vertex $(v,\alpha)$ is chosen in $G$ and 0 otherwise.

In \gcmc, for each $U_\alpha=\{v\in V: y(({v,\alpha}))=1 \}$, $\alpha\in [k]$, there is an \MSO formula $\varphi$ needs to be satisfied. Note that $U_i$ is equivalent to $\{ (v,\alpha)\in V:y(({v,\alpha}))=1 \}$ by the setting of the extended graph. Additionally, for each $U_v=\{(v,\alpha)\in V:\alpha\in[k]\}$, we require an \MSO formula $\varphi'$ requiring exactly one vertex in $U_v$ has state 1 needs to be satisfied. Because $k$ is constant, by Theorem~\ref{thm:MSO_CSP}, the CSP instance has bounded treewidth. The following claim links the feasible solution between $G$ and $G'$.
\begin{cl}\label{ext_var}
	Solution $y\in[k]^{V}$ is feasible for $G$ and $\varphi$ if and only if $y'\in\{0,1\}^{V\times [k]}$ is feasible for $G'$, $\varphi$ and $\varphi'$ \, where $y(v)=\alpha$ if and only if $y'((v,\alpha))=1$.
\end{cl}
\begin{proof}
	If $y$ is feasible, $U_\alpha=\{v\in V:y(v)=\alpha\}$ satisfies $\varphi$. Let $U'_\alpha=\{(v,\alpha)\in V':y'(v,\alpha)=1\}$. Then the induced graph of $U$ and $U'$ are the same. Hence $U'$ satisfies $\varphi$. And because each vertex $v$ gets exactly one state from $y$, exactly one vertex in $U_v$ will have state 1 by the relation between $y$ and $y'$. $\varphi'$ is satisfied. Therefore $y'$ is feasible.
	
	If $y'$ is feasible, by $\varphi'$, $y$ is well-defined. Then let  $U_\alpha$ and $U'_\alpha$ defined as the first case. The induced graph of $U$ and $U'$ are the same. Hence $U$ satisfies $\varphi$. $y$ is feasible.
\end{proof}
For notation brevity, we still use $G=(V,E)$ to denote the extended graph. But $v\in V$ will refer to vertex in the original graph and $(v,\alpha)\in V$ will refer to the vertex in the extended graph.

Note that the constraint \gc corresponds to feasible assignments to $CSP_\varphi(G)$ as in Definition~\ref{def:CSP_G}.
Consider the CSP extension $\vartheta$ obtained after applying Theorem~\ref{thm:MSO_CSP} to $CSP_\varphi(G)$.  Then $\vartheta$ has variables $V'\supseteq V$ and bounded treewidth. We obtain an extended  weight function $\ew:{V' \choose 2} \rightarrow \R_+$  from $\ew$  by setting $\ew'(u,v) = \ew(u,v)$ if $u,v\in V$ and $\ew'(u,v)=0$ otherwise. We now consider a new instance of $\gcmc$ on vertices $V'$ and constraint $\vartheta$.

}

\begin{cl}\label{cl:DP-sol-state}
Let $\{U_\alpha\}_{\alpha=1}^k$ be a $k$-partition satisfying \gc.
There is a collection of states $\{b[i]\in \Sigma_i\}_{i\in I}$ such that:
	\begin{itemize}
\setlength\itemsep{0pt}
		\item for each node $i\in I$ with children $j$ and $j'$, $(b[j],b[j']) \in {\FF}_{i,b[i]}$,
		\item for each leaf $\ell$ we have $\HH_{\ell,b[\ell]}\ne \emptyset$, and
		\item $U_\alpha = \{v\in V :  b_{\mathcal{T}}((v,\alpha))=1\}$ for all $\alpha\in [k]$, where  $b_{\mathcal{T}}=\bigcup_{i\in I} b[i]$.
	\end{itemize}
	Moreover, for any vertex $(v,\alpha)\in V'$, if $\overline{ v \alpha }\in I$ denotes the highest node in ${\cal T}$ containing $(v,\alpha)$ then we have: $v\in U_\alpha$  if and only if $b[\overline{ v \alpha }]((v,\alpha))=1$.
\end{cl}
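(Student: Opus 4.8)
The plan is to exhibit the states $\{b[i]\}_{i\in I}$ explicitly as restrictions of a single global feasible assignment of the extension $\vartheta$, and then to verify the three bullet points together with the ``moreover'' clause by elementary consistency computations.

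First I would pass from the combinatorial object to the CSP world. Since $\{U_\alpha\}_{\alpha=1}^k$ satisfies \gc, the encoding vector $t\in\{0,1\}^{V\times[k]}$ with $t((v,\alpha))=1 \iff v\in U_\alpha$ satisfies $G,t\models\varphi$, i.e.\ $t\in\Feas(CSP_\varphi(G))$. By the defining property of the CSP extension $\vartheta$, $t$ is the restriction to $V\times[k]$ of some feasible assignment $z\in\Feas(\vartheta)$. Recalling from the proof of Theorem~\ref{thm:DP} (feasible subsets) that $\Feas(\vartheta)=\KK(V')$, I fix such a $z\in\KK(V')$ once and for all; by construction $z((v,\alpha))=1 \iff v\in U_\alpha$ for every $(v,\alpha)\in V\times[k]$.

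Next I define $b[i] := z\cap X_i$ for every node $i\in I$. Since $X_i\subseteq V'$ and $z\in\KK(V')$, the observation that restriction preserves membership (namely $k\in\KK(W)\Rightarrow k\cap U\in\KK(U)$ for $U\subseteq W$) gives $b[i]\in\KK(X_i)=\Sigma_i$, so each $b[i]$ is a legitimate state. The three conditions then follow mechanically. For the $\FF$-condition, $b[j]\cap X_i=z\cap X_j\cap X_i=(z\cap X_i)\cap X_j=b[i]\cap X_j$ and symmetrically for $j'$, which is exactly the defining requirement of $\FF_{i,b[i]}$, hence $(b[j],b[j'])\in\FF_{i,b[i]}$. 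For the leaf condition, a leaf $\ell$ has $V_\ell=X_\ell$, so $\HH_{\ell,b[\ell]}=\{k\in\KK(X_\ell):k\cap X_\ell=b[\ell]\}=\{b[\ell]\}\ne\emptyset$. For the partition-recovery condition, since the bags of a tree decomposition cover $V'$ and all the $b[i]$ are mutually consistent restrictions of the single vector $z$, the iterated union satisfies $b_{\mathcal{T}}=\bigcup_{i\in I}b[i]=z$; therefore $\{v\in V: b_{\mathcal{T}}((v,\alpha))=1\}=\{v\in V: z((v,\alpha))=1\}=U_\alpha$.

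Finally, for the ``moreover'' clause, the node $\overline{v\alpha}$ contains $(v,\alpha)$ by definition, so $b[\overline{v\alpha}]((v,\alpha))=z((v,\alpha))$ for each $(v,\alpha)\in V\times[k]$, and combining with $z((v,\alpha))=1\iff v\in U_\alpha$ closes the argument. The only point needing genuine care --- the main (and mild) obstacle --- is the first step: one must correctly invoke the extension property and the identity $\Feas(\vartheta)=\KK(V')$ to produce a global assignment $z$ defined on \emph{all} of $V'$ (not merely on $V\times[k]$), since it is precisely the restrictions of this single $z$ to the bags $X_i$ that make all three local conditions hold simultaneously. Everything after that reduces to the trivial commutativity $z\cap X_j\cap X_i=z\cap X_i\cap X_j$, the fact $V_\ell=X_\ell$ at leaves, and the fact that the bags cover $V'$.
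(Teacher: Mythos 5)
Your proof is correct, and it reaches the claim by a genuinely different route than the paper. The paper constructs the states $b[i]$ \emph{top-down by induction}: it takes the feasible assignment $t\in\Feas(\vartheta)$ encoding $\{U_\alpha\}_{\alpha=1}^k$, locates it in some $\HH_{r,\sigma}$ via Theorem~\ref{thm:DP}(\ref{asm:DP:root}), and then repeatedly invokes the subproblem property, Theorem~\ref{thm:DP}(\ref{asm:DP:valid}), to split each partial solution $t_i\in\HH_{i,b[i]}$ as $b[i]\cup h_j\cup h_{j'}$, taking the resulting children states as $b[j],b[j']$; the ``moreover'' clause then needs a separate argument through the required-state property, Theorem~\ref{thm:DP}(\ref{asm:DP:req}). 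You instead define $b[i]:=z\cap X_i$ \emph{explicitly} as bag-restrictions of a single global assignment $z\in\Feas(\vartheta)=\KK(V')$, and verify all bullets directly from the concrete definitions in Definition~\ref{def:DP}: restriction-closure of $\KK$, commutativity $z\cap X_j\cap X_i=z\cap X_i\cap X_j$ for the $\FF$-condition, $V_\ell=X_\ell$ at leaves, and bag-coverage of $V'$ for recovering $z=b_{\cT}$; the ``moreover'' clause is then immediate. Your version is shorter, avoids the induction entirely, and makes transparent the fact (left implicit in the paper) that the recursively built states are nothing but $t\cap X_i$. The trade-off is scope: your argument leans on the concrete CSP instantiation of the dynamic program (in particular on $\Feas(\vartheta)=\KK(V')$ and the restriction observation from the proof of Theorem~\ref{thm:DP}), whereas the paper's induction uses only the abstract properties listed in Theorem~\ref{thm:DP} (equivalently Assumption~\ref{assm:DP}), so it would apply verbatim to \emph{any} dynamic program satisfying those conditions, which is the level of generality at which the framework of \cite{SLN17} operates. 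Your two auxiliary appeals---that the bags cover $V'$ and that $\Feas(\vartheta)=\KK(V')$---are legitimate; both are already implicit in the paper (the existence of the nodes $\overline{v\alpha}$, and the feasible-subsets step in the proof of Theorem~\ref{thm:DP}, respectively).
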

\begin{proof}
By definition of CSP $\vartheta$, we know that it has some feasible solution $t \in \{0,1\}^{V'}$ where $U_\alpha = \{v\in V :  t({(v,\alpha)}) =1\}$ for all $\alpha\in [k]$. Now, using Theorem~\ref{thm:DP}(5) we have $t\in \bigcup_{\sigma\in\Sigma_r}{\HH_{r,\sigma}}$.
	We define the states $b[i]$ in a top-down manner. We will also define an associated vector $t_i\in \HH_{i,b[i]}$ at each node $i$. At the root, we set $b[r]=\sigma$ such that $t\in \HH_{r,\sigma}$: this is well-defined because $t \in \bigcup_{\sigma\in\Sigma_r}{\HH_{r,\sigma}}$. We also set $t_r=t$. Having set $b[i]$ and $t_i\in \HH_{i,b[i]}$ for any node $i\in I$ with children $\{j,j'\}$, we use Theorem~\ref{thm:DP}(\ref{asm:DP:valid})  to write
$h_i=b[i] \cup h_j\cup h_{j'}$  where $h_j\in \HH_{j,w_j}$, $h_{j'}\in \HH_{j',w_{j'}}$  and $(w_j,w_{j'}) \in {\cal F}_{i,b[i]}$.
	Then we set $b[j]=w_j$,  $t_j=h_j$ and $b[{j'}]=w_{j'}$,  $t_{j'}=h_{j'}$  for the children  of node $i$. The first condition in the claim is immediate from the definition of states $b[i]$. By induction on the depth of node $i$, we obtain $t_i\in \HH_{i,b[i]}$ for each node $i$. This implies that $\HH_{\ell,b[\ell]}\ne \emptyset$ for each leaf $\ell$, which proves the second condition; moreover, by Theorem~\ref{thm:DP}(\ref{asm:DP:leaf}) we have $t_\ell =  b[\ell]$. Now, by definition of the vectors $t_i$, we obtain $t=t_r = \bigcup_{i\in I} b[i] =  b_{\mathcal{T}}$ which, combined with   $U_\alpha = \{v\in V :  t(({v,\alpha})) =1\}$ for all $\alpha\in [k]$, proves the third condition in the claim.
	
	Because $t=\bigcup_{i\in I} b[i]$, it is clear that if $b[\overline{ v \alpha }]((v,\alpha))=1$ then $v\in U_{\alpha}$. In the other direction, suppose $b[\overline{ v \alpha }]((v,\alpha))\ne1$: we will show $v\not\in U_{\alpha}$. Since $\overline{ v \alpha }$ is the highest node containing $(v,\alpha)$, it suffices to show that $t_{\overline{ v \alpha }}({(v,\alpha))}\ne 1$. But this follows directly from Theorem~\ref{thm:DP}(\ref{asm:DP:req}) because $t_{\overline{ v \alpha }}\in \HH_{\overline{ v \alpha },b[\overline{ v \alpha }]}$, $(v,\alpha)\in X_{\overline{ v \alpha }}$ and  $b[\overline{ v \alpha }]((v,\alpha))\ne1$.
\end{proof}

\paragraph{LP relaxation for  Max-$k$-Cut}
We start with some additional notation related to the tree decomposition ${\cal T}$ (from Theorem~\ref{thm:treedecomp}) and the dynamic program for CSP (from Theorem~\ref{thm:DP}).
\begin{itemize}
\setlength\itemsep{0pt}
	\item For any node $i\in I$, $T_i$ is the set consisting of (1) all nodes $N$ on the $r-i$ path in ${\cal T}$, and  (2) children of all nodes in $N\setminus\{i\}$.
	\item ${\cal P}$ is the collection of all node subsets $J$ such that $J\sse T_{\ell_1} \cup T_{\ell_2}$ for some pair of leaf-nodes $\ell_1,\ell_2$.
	\item $s[i]\in \Sigma_i$ denotes a state at node $i$. Moreover, for any subset of nodes $N\sse I$, we use the shorthand  $s[N]:=\{s[k]:k\in N\}$.
	\item $a[i]\in \Sigma_i$ denotes a state at node $i$ chosen by the algorithm. Similar to $s[N]$, for any subset $N\sse I$ of nodes, $a[N]:=\{a[k]:k\in N\}$.
	\item $\overline{v \alpha}\in I$ denotes the highest tree-decomposition node containing vertex $(v,\alpha)\in V'$.
\end{itemize}

The LP (see Figure~\ref{fig:LP}) that we use here is a generalization of that in~\cite{SLN17}. The variables are $y(s[N])$ for all $\{s[k]\in \Sigma_k\}_{k\in N}$ and $N\in {\cal P}$.  Variable $y(s[N])$ corresponds to the probability of the joint event that the solution (in \gc) ``induces'' state $s[k]$ at each node $k\in N$. Variable $z_{uv\alpha}$ corresponds to the probability that edge $(u,v)\in E$ is cut by part $\alpha$ of the $k$-partition. 

In constraint~\eqref{cons:DP}, we use $j$ and $j'$ to denote the two children of node $i\in I$.
We note that constraints~\eqref{cons:SA}-\eqref{cons:01} which utilize the dynamic-program structure, are identical to the constraints (4)-(8) in the LP from \cite{SLN17}. This allows us to essentially reuse many of the claims proved in \cite{SLN17}, which are stated below.

\begin{figure}[htb]

 \begin{flalign}
&\mbox{maximize } \,\,\frac12\sum_{\{u,v\}\in{V \choose 2}}{c_{uv} \sum_{\alpha = 1}^k z_{uv\alpha}}\tag{LP}\label{LP}&\\
&z_{uv\alpha}\,\, =\,\,  \sum_{\substack{s[\overline{ u \alpha }]\in\Sigma_{\overline{ u \alpha }},\, s[\overline{ v \alpha }]\in\Sigma_{\overline{ v \alpha }} \\ s[\overline{ u \alpha }]((u,\alpha))\ne s[\overline{ v \alpha }]((v,\alpha))}}{y(s[\{\overline{ u \alpha },\overline{ v \alpha }\}])},\notag\\ &\forall \{u,v\}\in{V \choose 2}, \forall \alpha\in[k];\label{cons:obj}&\\
&y(s[N])\,\, =\,\, \sum_{s[i]\in\Sigma_i}{y(s[N\cup\{i\}])},\notag\\
&\forall s[k]\in\Sigma_k,\,\, \forall k\in N,\,\, \forall N\in {\cal P},\,  \forall i\notin N\,:\, N\cup \{i\}\in {\cal P};\label{cons:SA}&\\
&\sum_{s[r]\in\Sigma_r}y(s[r])=1;&\label{cons:r}\\
&y(s[\{i,j,j'\}])\,=\,0,\notag\\
&\forall i\in I, \,  \forall s[i]\in \Sigma_i,\, \forall (s[j],s[j'])\notin {\FF}_{i,s[i]}; \label{cons:DP}&\\
&y(s[\ell])\,=\,0,\notag\\
& \forall \mbox{ leaf }\ell\in I, \, \forall s[\ell]\in \Sigma_\ell : \HH_{\ell,s[\ell]} = \emptyset ; \label{cons:DP-leaf}&\\
&0 \,\,\le \,\,y(s[N])\,\,\le \,\, 1,\notag\\
&\forall N\in {\cal P},\, \forall  s[k]\in\Sigma_k \mbox{ for } k\in N. &\label{cons:01}
\end{flalign}
\caption{The LP formulation}
\label{fig:LP}
\end{figure}

\begin{cl}
	\label{cl:sym-LP-distr}
	Let $y$ be feasible to \eqref{LP}. For any node $i\in I$ with children $j,j'$ and $s[k]\in\Sigma_k$ for all $k\in T_i$,
	\begin{equation*}
	y(s[T_i])\,\,=\,\,\sum_{s[j]\in\Sigma_j}\,\, \sum_{s[j']\in\Sigma_{j'}}{y(s[T_i\cup\{j,j'\}])}.
	\end{equation*} 
\end{cl}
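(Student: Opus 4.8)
The plan is to derive this two-variable marginalization identity by applying the single-variable marginalization constraint~\eqref{cons:SA} twice, peeling off the children $j$ and $j'$ of node $i$ one at a time. Constraint~\eqref{cons:SA} already lets us pass from a marginal $y(s[N])$ to $y(s[N\cup\{i\}])$ by summing over the freshly added coordinate, so iterating it is the natural route; the only real content is checking that each set in the chain $T_i \subseteq T_i\cup\{j\} \subseteq T_i\cup\{j,j'\}$ lies in ${\cal P}$ and that $j,j'$ are genuinely new coordinates, since~\eqref{cons:SA} applies only to $N\in{\cal P}$ with $N\cup\{i\}\in{\cal P}$ and $i\notin N$.

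First I would verify the ${\cal P}$-membership. Recall that for a leaf $\ell$, the set $T_\ell$ contains the entire root-to-$\ell$ path together with the children of every node strictly above $\ell$ on that path. Fix a leaf $\ell_1$ in the subtree rooted at $j$ and a leaf $\ell_2$ in the subtree rooted at $j'$. Since $\ell_1$ lies below $i$, the root-to-$\ell_1$ path contains the root-to-$i$ path and passes through $i$ with $i\neq\ell_1$; hence $T_{\ell_1}$ contains the root-to-$i$ path, the children of all nodes on that path (in particular $j$ and $j'$), and therefore all of $T_i\cup\{j,j'\}$. Consequently $T_i$, $T_i\cup\{j\}$, and $T_i\cup\{j,j'\}$ are all contained in $T_{\ell_1}\subseteq T_{\ell_1}\cup T_{\ell_2}$, so each belongs to ${\cal P}$. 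The key simplification is that a single leaf $\ell_1$ below $j$ already witnesses all three sets at once.

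Next I would check the novelty conditions $j\notin T_i$ and $j'\notin T_i\cup\{j\}$. Both $j$ and $j'$ lie strictly below $i$, so neither is on the root-to-$i$ path; and their unique parent is $i$, which is precisely the node excluded when forming the ``children of path nodes'' part of $T_i$. Thus neither $j$ nor $j'$ lies in $T_i$, and of course $j'\neq j$. With these checks in hand, applying~\eqref{cons:SA} with $N=T_i$ and added node $j$ gives $y(s[T_i])=\sum_{s[j]\in\Sigma_j} y(s[T_i\cup\{j\}])$, and applying it again with $N=T_i\cup\{j\}$ and added node $j'$ gives $y(s[T_i\cup\{j\}])=\sum_{s[j']\in\Sigma_{j'}} y(s[T_i\cup\{j,j'\}])$. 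Substituting the second identity into the first yields the claimed double sum.

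The argument is essentially bookkeeping, and I do not expect a genuine obstacle. The one step that requires care rather than calculation is the ${\cal P}$-membership of the intermediate set $T_i\cup\{j\}$: because the definition of ${\cal P}$ quantifies over \emph{pairs} of leaves, one must exhibit an explicit pair, but the observation above that a leaf below $j$ simultaneously covers $T_i$, $j$, and $j'$ resolves this cleanly and makes the two applications of~\eqref{cons:SA} legitimate.
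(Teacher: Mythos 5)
Your proof is correct and takes essentially the same approach as the paper's own argument: both establish that $T_i\cup\{j,j'\}\in{\cal P}$ by observing it is contained in $T_\ell$ for a leaf $\ell$ below $i$, and then apply constraint~\eqref{cons:SA} twice. The only difference is that you spell out the bookkeeping the paper leaves implicit (membership of the intermediate set $T_i\cup\{j\}$ in ${\cal P}$ and the novelty conditions $j\notin T_i$, $j'\notin T_i\cup\{j\}$).
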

\begin{proof}
Note that $T_i\cup\{j,j'\}\sse T_\ell$ for any leaf node $\ell$ in the subtree below $i$. So $T_i\cup\{j,j'\}\in{\cal P}$ and the variables $y(s[T_i\cup\{j,j'\}])$ are well-defined. The claim 
 follows by two applications of 
 \eqref{cons:SA}.
\end{proof}

\begin{lemma}\label{lem:poly_LP}
	\eqref{LP} has a polynomial number of variables and constraints. 
\end{lemma}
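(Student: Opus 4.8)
The plan is to show that every ingredient of \eqref{LP}---the variables $y(s[N])$ and $z_{uv\alpha}$, together with each constraint family \eqref{cons:obj}--\eqref{cons:01}---is polynomially bounded in $n$. The whole argument rests on the single structural fact, supplied by Theorem~\ref{thm:treedecomp}, that the tree decomposition $\mathcal{T}$ is a balanced binary tree of depth $d = O(\log n)$; in particular $|I| = O(n)$, and the number of leaves is $O(n)$.

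First I would bound the auxiliary node-sets. Since the $r$--$i$ path contains at most $d+1$ nodes and $\mathcal{T}$ is binary, the set $T_i$ (the path nodes together with the children of all path nodes other than $i$) satisfies $|T_i| \le 3(d+1) = O(\log n)$. Hence for every pair of leaves $\ell_1,\ell_2$ we have $|T_{\ell_1}\cup T_{\ell_2}| = O(\log n)$, so every $N\in\mathcal{P}$ satisfies $|N| = O(\log n)$. Two consequences follow. (i) Because $\mathcal{P}$ consists of subsets of one of at most $\binom{|I|}{2} = O(n^2)$ sets, each of logarithmic size, we get $|\mathcal{P}| \le O(n^2)\cdot 2^{\max_{\ell_1,\ell_2}|T_{\ell_1}\cup T_{\ell_2}|} = O(n^2)\cdot 2^{O(\log n)} = n^{O(1)}$. (ii) For a fixed $N$, the number of state-tuples $\{s[k]\in\Sigma_k\}_{k\in N}$ is $\prod_{k\in N}|\Sigma_k| \le (\max_i|\Sigma_i|)^{|N|}$, and since $\max_i|\Sigma_i| = O(1)$ by Theorem~\ref{thm:DP}(\ref{asm:DP:state}) and $|N| = O(\log n)$, this too is $2^{O(\log n)} = n^{O(1)}$.

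Combining (i) and (ii) bounds the number of $y$-variables by $n^{O(1)}$, while the $z$-variables number $\binom{n}{2}\cdot k = O(n^2)$ as $k$ is constant; thus \eqref{LP} has polynomially many variables. For the constraints I would count each family against these bounds: \eqref{cons:obj} is indexed by $(\{u,v\},\alpha)$, giving $O(n^2k)$; \eqref{cons:SA} is indexed by a $y$-variable together with a node $i\in I$, giving at most (number of $y$-variables)$\cdot|I| = n^{O(1)}$; \eqref{cons:r} is a single constraint; \eqref{cons:DP} is indexed by $(i,s[i],s[j],s[j'])$ and bounded by $|I|\cdot(\max_i|\Sigma_i|)^3 = O(n)$; \eqref{cons:DP-leaf} by $(\ell,s[\ell])$, hence $O(n)$; and \eqref{cons:01} is at most twice the number of $y$-variables. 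Summing the families yields $n^{O(1)}$ constraints.

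The only genuinely delicate point is the estimate $2^{|N|} = n^{O(1)}$ (and likewise $(\max_i|\Sigma_i|)^{|N|} = n^{O(1)}$), which would fail for a general tree decomposition: it relies entirely on the logarithmic depth guaranteed by the balanced decomposition of Theorem~\ref{thm:treedecomp}. The crux of the write-up is therefore to make the chain $|N| \le |T_{\ell_1}\cup T_{\ell_2}| = O(\log n)$ explicit and to carry the constant factor correctly through the exponent so that $2^{O(\log n)}$ is seen to be polynomial.
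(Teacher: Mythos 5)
Your proof is correct and follows essentially the same route as the paper: use the balanced binary decomposition of Theorem~\ref{thm:treedecomp} to bound $|T_i|$, hence every $|N|$ for $N\in\mathcal{P}$, by $O(\log n)$, deduce $|\mathcal{P}| \le O(n^2)\cdot 2^{O(\log n)} = n^{O(1)}$ and $(\max_i|\Sigma_i|)^{|N|} = n^{O(1)}$, and conclude that there are polynomially many $y$- and $z$-variables. Your explicit family-by-family count of the constraints \eqref{cons:obj}--\eqref{cons:01} is in fact more complete than the paper's proof, which stops after counting variables (and which writes $\max|\mathcal{H}_{i,\sigma}|=O(1)$ where the quantity actually needed is $\max_i|\Sigma_i|=O(1)$ from Theorem~\ref{thm:DP}(\ref{asm:DP:state}), as you correctly use).
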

\begin{proof}
	There are ${n\choose 2}\cdot k=O(kn^2)$ variables $z_{uv\alpha}$. Because the tree is binary, we have $|T_i|\le 2d$ for any node $i$, where $d=O(\log n)$ is the depth of the tree decomposition. Moreover there are only $O(n^2)$ pairs of leaves as there are $O(n)$ leaf nodes. For each pair $\ell_1,\ell_2$ of leaves, we have $|T_{\ell_1}\cup T_{\ell_2}|\le 4d$. Thus $|{\cal P}|\le O(n^2)\cdot 2^{4d}=poly(n)$. By Theorem~\ref{thm:DP}, we have $\max|\HH_{i,\sigma}|=O(1)$, so the number of $y$-variables is at most $|{\cal P}|\cdot (\max|\HH_{i,\sigma}|)^{4d}=poly(n)$. This shows that ~\eqref{LP} has polynomial size and can be solved optimally in polynomial time. Finally, it is clear  that the rounding algorithm runs in polynomial time.
\end{proof}

\begin{lemma}\label{lem:LP-LB}
	\eqref{LP} is a valid relaxation of \gcmc.
\end{lemma}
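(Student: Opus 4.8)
The plan is to show that \eqref{LP} is a valid relaxation by exhibiting, for every feasible $k$-partition $\{U_\alpha\}_{\alpha=1}^k\in\gc$, an integral point that is feasible for \eqref{LP} and whose objective value equals the weight of the edges cut by $\{U_\alpha\}$. Since the LP maximizes, this establishes that the LP optimum is at least the \gcmc optimum, which is exactly what ``valid relaxation'' requires. First I would invoke Claim~\ref{cl:DP-sol-state} to obtain, for the given partition, a collection of states $\{b[i]\in\Sigma_i\}_{i\in I}$ satisfying $(b[j],b[j'])\in\FF_{i,b[i]}$ at every internal node, $\HH_{\ell,b[\ell]}\neq\emptyset$ at every leaf, and the membership characterization $v\in U_\alpha$ iff $b[\overline{v\alpha}]((v,\alpha))=1$.

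Given these states, I define the candidate solution by setting $y(s[N])=1$ if $s[k]=b[k]$ for every $k\in N$, and $y(s[N])=0$ otherwise, for each $N\in{\cal P}$; the values $z_{uv\alpha}$ are then forced by \eqref{cons:obj}. Verifying feasibility is then a routine check of each constraint against this ``indicator of agreement with $b$''. For \eqref{cons:SA} and \eqref{cons:r}, exactly one choice of the additional coordinate (namely $b[i]$, resp.\ $b[r]$) agrees with $b$, so the marginalizing sum equals the indicator on the smaller set, resp.\ equals $1$. For \eqref{cons:DP}, the only triple carrying $y=1$ is $(b[i],b[j],b[j'])$, and $(b[j],b[j'])\in\FF_{i,b[i]}$ by Claim~\ref{cl:DP-sol-state}, so $y$ vanishes on every invalid combination. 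For \eqref{cons:DP-leaf}, $y(s[\ell])=1$ only for $s[\ell]=b[\ell]$, and $\HH_{\ell,b[\ell]}\neq\emptyset$, again by Claim~\ref{cl:DP-sol-state}. Finally \eqref{cons:01} is immediate. One should also note that $\{\overline{u\alpha},\overline{v\alpha}\}\in{\cal P}$, so the $y$-variables in \eqref{cons:obj} exist: picking leaves $\ell_1,\ell_2$ below $\overline{u\alpha},\overline{v\alpha}$ gives $\{\overline{u\alpha},\overline{v\alpha}\}\subseteq T_{\ell_1}\cup T_{\ell_2}$.

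It remains to compute the objective, which is the crux of the argument. For a fixed pair $\{u,v\}$ and part $\alpha$, constraint \eqref{cons:obj} applied to the integral $y$ gives $z_{uv\alpha}=1$ exactly when $b[\overline{u\alpha}]((u,\alpha))\neq b[\overline{v\alpha}]((v,\alpha))$, which by the membership characterization of Claim~\ref{cl:DP-sol-state} means precisely that exactly one of $u,v$ lies in $U_\alpha$, and $z_{uv\alpha}=0$ otherwise. Now if $\{u,v\}$ is not cut, i.e.\ $h(u)=h(v)$, then for every $\alpha$ either both or neither endpoint is in $U_\alpha$, so $\sum_\alpha z_{uv\alpha}=0$; whereas if $\{u,v\}$ is cut, say $h(u)=\beta\neq\gamma=h(v)$, then exactly the two parts $\alpha\in\{\beta,\gamma\}$ separate $u$ from $v$, so $\sum_\alpha z_{uv\alpha}=2$. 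Hence $\frac12\sum_\alpha z_{uv\alpha}$ is the indicator that $\{u,v\}$ is cut, and the LP objective $\frac12\sum_{\{u,v\}}c_{uv}\sum_\alpha z_{uv\alpha}$ equals $\sum_{\{u,v\}:\,h(u)\neq h(v)}c_{uv}$, the cut value of $\{U_\alpha\}$.

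The main obstacle is precisely this objective computation: one must correctly translate the abstract state values $b[\overline{v\alpha}]((v,\alpha))$ back into vertex-membership in $U_\alpha$ (supplied by the ``moreover'' part of Claim~\ref{cl:DP-sol-state}) and observe that each cut edge is double-counted, once from each side of the cut, which is exactly what the factor $\frac12$ in the objective corrects. Everything else is a mechanical verification that the indicator-of-$b$ point satisfies the marginalization, consistency, and dynamic-program constraints.
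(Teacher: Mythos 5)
Your proof is correct and matches the paper's own argument essentially step for step: you build the same integral ``indicator-of-$b$'' solution from Claim~\ref{cl:DP-sol-state}, verify each LP constraint from the corresponding property of that claim, and account for the factor $\frac12$ by the double-counting of each cut edge. The only differences are cosmetic additions on your part (explicitly checking \eqref{cons:r} and that $\{\overline{u\alpha},\overline{v\alpha}\}\in{\cal P}$), which the paper leaves implicit.
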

\begin{proof}
Let $k$-partition $\{U_\alpha\}_{\alpha=1}^k$ be any feasible solution to  \gc. Let $\{b[i]\}_{i\in I}$ denote the states given by Claim~\ref{cl:DP-sol-state} corresponding to $\{U_\alpha\}_{\alpha=1}^k$. For any subset $N\in {\cal P}$ of nodes, and for all $\{s[i]\in \Sigma_i\}_{i\in N}$, set
	$$y(s[N]) = \left\{
	\begin{array}{ll}
	1, & \mbox{ if $s[i]=b[i]$ for all }i\in N;\\
	0, & \mbox{ otherwise.}
	\end{array}\right.
	$$
	Clearly constraints~\eqref{cons:SA} and~\eqref{cons:01} are satisfied. By the first property in Claim~\ref{cl:DP-sol-state},
	constraint~\eqref{cons:DP} is satisfied. And by the second property in Claim~\ref{cl:DP-sol-state},
	constraint~\eqref{cons:DP-leaf} is also satisfied. The last property in Claim~\ref{cl:DP-sol-state} implies that $ v\in U_\alpha \iff b[\overline{ v \alpha }]((v,\alpha))=1$ for any vertex $v\in V$. So any edge $\{u,v\}$ is cut by $U_\alpha$ exactly when $b[\overline{ u \alpha }]((u,\alpha))\ne b[\overline{ v \alpha }]((v,\alpha))$. Using the setting of variable $z_{uv\alpha}$ in~\eqref{cons:obj} it follows that $z_{uv\alpha}$ is exactly the indicator of edge $\{u,v\}$ being cut by $U_\alpha$. Finally, the objective value is exactly the total weight of edges cut by the $k$-partition $\{U_\alpha\}_{\alpha=1}^k$  	where the coefficient $\frac12$ comes from the fact that that summation counts each cut-edge twice.  Thus 	\eqref{LP} is a valid relaxation.
	\end{proof}

\paragraph{Rounding Algorithm} This is a top-down procedure, exactly as in \cite{SLN17}.
We start with the root node $r\in I$. Here $\{y(s[r])\,:\,s[r]\in\Sigma_r\}$ defines a probability distribution over the states of $r$. We sample a state $a[r]\in\Sigma_r$ from this distribution. Then we continue top-down: for any node $i\in I$, given the chosen states  $a[k]$ at each $k\in T_i$, we sample states for both children of $i$ {\em simultaneously} from their joint distribution given at node $i$. Our algorithm is formally described in Algorithm \ref{alg:round}.

\begin{algorithm}[htb]\label{alg:round}
	\LinesNumbered
	\SetKwInOut{Input}{Input}\SetKwInOut{Output}{Output}
	\SetKwFor{Do}{Do}{\string:}{end}
	\Input{Optimal solution of \ref{LP}.}
	\Output{A vertex partition of $V$ in ${\cal S}_G$.}
	\label{step:sym-round-sample} Sample a state $a[r]$ at the root node by distribution $y(s[r])$. \\
	\Do{process all nodes $i$ in ${\cal T}$ in order of increasing depth}{
		Sample states $a[j],a[j']$ for the children of node $i$ by joint distribution \label{step:joint-sample}
		\begin{align}
&\Pr[a[j]=s[j]\mbox{ and }a[j']=s[j']]
		 \notag \\
		 &= \quad \frac{y(s[T_i\cup\{j,j'\}])}{y(s[T_i])},\label{eq:sym-SA-round}
\end{align}
		where   $s[T_i]=a[T_i]$.
	}
	\Do{process all nodes $i$ in ${\cal T}$ in order of decreasing depth}{
		$h_i =a[i] \cup h_j\cup h_{j'}$ where $j,j'$ are the children of $i$.
	}
	{Set $U_\alpha = \{ v\in V : h_r((v,\alpha))=1\}$ for all  $\alpha\in [k]$.}\\
	\Return $k$-partition $\{U_\alpha\}_{\alpha=1}^k$.
	\caption{Rounding Algorithm for \ref{LP}}
\end{algorithm}

\begin{lemma}\label{lem:alg-feasible}
	The algorithm's solution $\{U_\alpha\}_{\alpha=1}^k$ is always feasible.
\end{lemma}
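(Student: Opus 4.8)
The plan is to reduce feasibility of the output $k$-partition to membership of the reconstructed root vector $h_r$ in $\Feas(\vartheta)$. By Theorem~\ref{thm:DP}(\ref{asm:DP:root}) we have $\Feas(V',\CC_\vartheta)=\bigcup_{\sigma\in\Sigma_r}\HH_{r,\sigma}$, so it suffices to prove the single statement $h_r\in\HH_{r,a[r]}$. Since $\vartheta$ is a CSP extension of $CSP_\varphi(G)$ and the algorithm reads off $U_\alpha=\{v\in V:h_r((v,\alpha))=1\}$ from the original variables, feasibility of $h_r$ for $\vartheta$ guarantees that the restriction $h_r|_N$ is feasible for $CSP_\varphi(G)$, i.e. that $\{U_\alpha\}_{\alpha=1}^k$ satisfies $\varphi$. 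This mirrors the corresponding argument in~\cite{SLN17}, since our constraints~\eqref{cons:SA}--\eqref{cons:01} coincide with theirs.

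I would establish $h_r\in\HH_{r,a[r]}$ by a bottom-up induction proving the stronger claim $h_i\in\HH_{i,a[i]}$ for every node $i\in I$, following the reconstruction order in Algorithm~\ref{alg:round}. For the base case, at each leaf $\ell$ the top-down phase assigns $a[\ell]$ only a state with positive probability, and constraint~\eqref{cons:DP-leaf} zeroes out any state with $\HH_{\ell,a[\ell]}=\emptyset$; hence $\HH_{\ell,a[\ell]}\ne\emptyset$, and by Theorem~\ref{thm:DP}(\ref{asm:DP:leaf}) we get $h_\ell=a[\ell]\in\HH_{\ell,a[\ell]}$. For the inductive step at a non-leaf node $i$ with children $j,j'$, I assume $h_j\in\HH_{j,a[j]}$ and $h_{j'}\in\HH_{j',a[j']}$; provided $(a[j],a[j'])\in\FF_{i,a[i]}$, the subproblem identity Theorem~\ref{thm:DP}(\ref{asm:DP:valid}) yields $h_i=a[i]\cup h_j\cup h_{j'}\in\HH_{i,a[i]}$, closing the induction.

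The crux is therefore verifying that every child pair produced by the joint sampling in line~\ref{step:joint-sample} satisfies $(a[j],a[j'])\in\FF_{i,a[i]}$. I would argue that the sampling distribution~\eqref{eq:sym-SA-round} is supported only on states with $y(a[T_i\cup\{j,j'\}])>0$, and that marginalizing this joint atom down to $y(a[\{i,j,j'\}])$ keeps it positive: all intermediate node subsets lie in ${\cal P}$ (being subsets of some $T_\ell$ for a leaf $\ell$ below $i$), so repeated application of~\eqref{cons:SA}, exactly as in Claim~\ref{cl:sym-LP-distr}, gives $y(a[\{i,j,j'\}])\ge y(a[T_i\cup\{j,j'\}])>0$. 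Constraint~\eqref{cons:DP} forces $y(a[\{i,j,j'\}])=0$ whenever $(a[j],a[j'])\notin\FF_{i,a[i]}$, so such invalid combinations are never sampled. I also need the sampling to be well-defined, i.e. the conditioning denominator $y(a[T_i])$ in~\eqref{eq:sym-SA-round} to be positive; this follows by the same inductive reasoning, since each $a[k]$ was drawn from a distribution whose atoms carry positive probability. I expect this Sherali--Adams-style bookkeeping of marginalization and support containment to be the main obstacle; once it is in place, feasibility follows mechanically from the subproblem property.
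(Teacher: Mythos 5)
Your proposal is correct and follows essentially the same route as the paper's own proof: the same bottom-up induction establishing $h_i\in\HH_{i,a[i]}$ via Theorem~\ref{thm:DP}(\ref{asm:DP:leaf}) and (\ref{asm:DP:valid}), the same use of constraints~\eqref{cons:DP} and~\eqref{cons:DP-leaf} to rule out invalid sampled states, and the same final step through Theorem~\ref{thm:DP}(\ref{asm:DP:root}) and the CSP extension property of Theorem~\ref{thm:MSO_CSP}. The only difference is that you spell out the marginalization argument (via repeated application of~\eqref{cons:SA}) showing $y(a[\{i,j,j'\}])>0$, which the paper leaves implicit.
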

\begin{proof}
	Note that the distributions used in Step~\ref{step:sym-round-sample} and Step~\ref{step:joint-sample} are well-defined due to Claim~\ref{cl:sym-LP-distr}; so the states $a[i]$s are well-defined. Moreover, by the choice of these distributions, for each node $i$, $y(a[T_i])>0$.
	
	We now show that  for any node $i\in I$ with children $j,j'$ we have $(a[j],a[j'])\in {\FF}_{i,a[i]}$. Indeed, at the iteration for node $i$ (when $a[j]$ and $a[j']$ are set), using the conditional probability distribution \eqref{eq:sym-SA-round} and   constraint~\eqref{cons:DP}, we have $(a[j],a[j'])\in {\FF}_{i,a[i]}$ with probability one.

	We show by induction that for each node $i\in I$, $h_i\in \HH_{i,a[i]}$. The base case is when $i$ is a leaf. In this case, due to constraint~\eqref{cons:DP-leaf} and the fact that $y(a[T_i])>0$ we know that $\HH_{i,a[i]}\ne \emptyset$. So $h_i=a[i]\in \HH_{i,a[i]}$ by Theorem~\ref{thm:DP}(\ref{asm:DP:leaf}). For the inductive step, consider node $i\in I$ with children $j,j'$ where $h_j\in \HH_{j,a[j]}$ and $h_{j'} \in \HH_{j',a[j']}$.  Moreover, from the property above, $(a[j],a[j'])\in {\FF}_{i,a[i]}$. Now using Theorem~\ref{thm:DP}(\ref{asm:DP:valid}) we have $h_i =a[i] \cup h_{j}\cup h_{j'}\in \HH_{i,a[i]}$. Finally, using $h_r\in \HH_{r,a[r]}$ at the root node and Theorem~\ref{thm:DP}(\ref{asm:DP:root}), it follows that $h_r\in \Feas(\vartheta)$. Now let $h'$ denote the restriction of $h_r$ to the variables $\{(v,\alpha) : v\in V, \alpha\in[k]\}$. Then, using the CSP extension result (Theorem~\ref{thm:MSO_CSP}) we obtain that $h'$ is feasible for $CSP_\varphi(G)$. In other words, the $k$-partition $\{U_\alpha\}_{\alpha=1}^k$ satisfies \gc.	
\end{proof}

\ignore{

\begin{cl}
	Let $h$ be a solution. $h(u)=\alpha$ if and only if $a[\overline{ u \alpha }]((u,\alpha))=1$.
\end{cl}
\begin{proof}
	This proof is identical to that of the last property in Claim~\ref{cl:DP-sol-state}.
	
\end{proof}

}


\begin{cl}
	For any node $i$ and states $s[k]\in \Sigma _k$ for all $k\in T_i$, the rounding algorithm satisfies $\Pr[a[T_i]=s[T_i]]=y(s[T_i])$.\label{cl:path}
\end{cl}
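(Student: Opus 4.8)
The plan is to prove the identity by induction on the depth of the node $i$ in the tree decomposition ${\cal T}$, matching the top-down order in which the rounding algorithm fixes the states $a[\cdot]$.

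For the base case I would take $i=r$. Here the $r$--$r$ path is the single node $r$ and there are no further children to adjoin, so $T_r=\{r\}$. The algorithm's first step samples $a[r]$ directly from the distribution $\{y(s[r])\}_{s[r]\in\Sigma_r}$, which is a genuine probability distribution by constraint~\eqref{cons:r}; hence $\Pr[a[r]=s[r]]=y(s[r])=y(s[T_r])$.

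The inductive step rests on a structural identity that I would establish first: if $i$ is a non-root node with parent $p$, and $j,j'$ denote the two children of $p$ (so $i\in\{j,j'\}$), then $T_i=T_p\cup\{j,j'\}$. This follows by unwinding the definition of $T_\cdot$: the $r$--$i$ path extends the $r$--$p$ path by $i$, so the path-node part of $T_i$ exceeds that of $T_p$ by $\{i\}$, while the ``children'' part of $T_i$ exceeds that of $T_p$ by exactly the children $\{j,j'\}$ of $p$; since $i\in\{j,j'\}$ the two contributions combine to $T_p\cup\{j,j'\}$. Consequently the states $a[T_i]$ are obtained from $a[T_p]$ precisely by the joint sampling of $a[j],a[j']$ carried out while the algorithm processes $p$. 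Applying the chain rule together with the sampling rule~\eqref{eq:sym-SA-round},
\[
\Pr\bigl[a[T_i]=s[T_i]\bigr]=\Pr\bigl[a[T_p]=s[T_p]\bigr]\cdot\frac{y(s[T_p\cup\{j,j'\}])}{y(s[T_p])},
\]
so that the inductive hypothesis $\Pr[a[T_p]=s[T_p]]=y(s[T_p])$ and the equality $T_p\cup\{j,j'\}=T_i$ make the right-hand side collapse to $y(s[T_i])$.

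The step I expect to need the most care is the degenerate case $y(s[T_p])=0$, in which the conditional distribution~\eqref{eq:sym-SA-round} is undefined. There I would argue both sides vanish: by the inductive hypothesis $\Pr[a[T_p]=s[T_p]]=0$, so the algorithm never conditions on this event and $\Pr[a[T_i]=s[T_i]]=0$; and by Claim~\ref{cl:sym-LP-distr} we have $y(s[T_p])=\sum_{s[j],s[j']}y(s[T_i])$ with all summands nonnegative, forcing $y(s[T_i])=0$ as well. Beyond this, the only nonroutine ingredient is the structural identity $T_i=T_p\cup\{j,j'\}$, which is where the precise definition of $T_i$ must be invoked.
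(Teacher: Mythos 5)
Your proof is correct and follows essentially the same route as the paper's: induction on depth, with the base case at the root via constraint~\eqref{cons:r} and the inductive step combining the structural identity $T_j = T_{j'} = T_i \cup \{j,j'\}$ with the sampling rule~\eqref{eq:sym-SA-round}. Your explicit treatment of the degenerate case $y(s[T_p])=0$ (via Claim~\ref{cl:sym-LP-distr} and nonnegativity) is a careful refinement that the paper's proof leaves implicit, but it does not change the overall argument.
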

\begin{proof}
We proceed by induction on the depth of node $i$. It is clearly true when $i=r$, i.e. $T_i=\{r\}$. Assuming the statement is true for node $i$, we will prove it for $i$'s children.  Let $j,j'$ be the children nodes of $i$; note that $T_j=T_{j'}=T_i\cup\{j,j'\}$. Then using~\eqref{eq:sym-SA-round}, we have
{\small 	$$\Pr[a[T_j]=s[T_j] \,\, | \,\, a[T_i]=s[T_i] ] = \frac{y(s[T_i\cup\{j,j'\}])}{y(s[T_i])}.$$}
	Combined with $\Pr[a[T_i]=s[T_i]]=y(s[T_i])$ we obtain  $\Pr[a[T_j]=s[T_j]]=y(s[T_j])$ as desired.
\end{proof}

\begin{lemma}\label{lem:sym-alg-cut1}
	Consider any $u,v\in V$ and $\alpha\in [k]$ such that $\overline{u\alpha} \in T_{\overline{v\alpha}}$. Then the probability that edge $(u,v)$ is cut by $U_{\alpha}=\{v\in V:h_r((v,\alpha))=1\}$ is  $z_{uv\alpha}$.
\end{lemma}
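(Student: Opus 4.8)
The plan is to reduce the event ``edge $(u,v)$ is cut by $U_\alpha$'' to an event that depends only on the two states $a[\overline{u\alpha}]$ and $a[\overline{v\alpha}]$, and then to read off its probability from the LP variable $z_{uv\alpha}$ by invoking Claim~\ref{cl:path} at the single node $\overline{v\alpha}$.

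First I would establish that membership in $U_\alpha$ is decided locally: for every $w\in V$ we have $h_r((w,\alpha)) = a[\overline{w\alpha}]((w,\alpha))$, so that $w\in U_\alpha$ iff $a[\overline{w\alpha}]((w,\alpha))=1$. This is exactly the last property of Claim~\ref{cl:DP-sol-state} transported to the rounding solution $h_r$. It holds because, by the required-state condition Theorem~\ref{thm:DP}(\ref{asm:DP:req}), $h_{\overline{w\alpha}}\cap X_{\overline{w\alpha}} = a[\overline{w\alpha}]$, and because the nodes containing $(w,\alpha)$ form a connected subtree rooted at $\overline{w\alpha}$; hence as the partial solutions $h_i =a[i]\cup h_j\cup h_{j'}$ are combined up toward the root, the coordinate $(w,\alpha)$ is $\lambda$ in both the ancestor's own state and in every sibling subtree, so its value is never overwritten after being fixed at $\overline{w\alpha}$. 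Consequently, the edge $(u,v)$ is cut by $U_\alpha$ precisely when $a[\overline{u\alpha}]((u,\alpha)) \ne a[\overline{v\alpha}]((v,\alpha))$.

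Next I would compute the probability of this event. Writing $p=\overline{u\alpha}$ and $q=\overline{v\alpha}$, the hypothesis $p\in T_q$ together with $q\in T_q$ places both nodes inside $T_q$. Claim~\ref{cl:path} applied at node $q$ gives $\Pr[a[T_q]=s[T_q]] = y(s[T_q])$ for every choice of states on $T_q$. Marginalizing out the states at the nodes of $T_q\setminus\{p,q\}$ one at a time via constraint~\eqref{cons:SA} (each intermediate node-set stays in ${\cal P}$, since $\{p,q\}\subseteq T_q\subseteq T_\ell$ for any leaf $\ell$ below $q$) yields $\Pr[a[p]=s[p],\,a[q]=s[q]] = y(s[\{p,q\}])$ for all $s[p]\in\Sigma_p$ and $s[q]\in\Sigma_q$. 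Summing this identity over exactly the pairs $(s[p],s[q])$ with $s[p]((u,\alpha))\ne s[q]((v,\alpha))$ gives the probability of the cut event, and the resulting sum is by definition $z_{uv\alpha}$ as written in constraint~\eqref{cons:obj}.

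The only real obstacle is the locality step of the first paragraph: one must be careful that combining the $h_i$'s never changes the value at $(w,\alpha)$ once it has been fixed at $\overline{w\alpha}$, which is precisely where the subtree structure of the tree decomposition is used. The rest is bookkeeping: checking $\{p,q\}\in{\cal P}$ and noting that the hypothesis $\overline{u\alpha}\in T_{\overline{v\alpha}}$ is exactly what allows a single application of Claim~\ref{cl:path} to capture the joint law of the two relevant states (the complementary case, in which the two highest nodes lie in incomparable subtrees, must be treated by a separate argument).
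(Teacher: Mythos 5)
Your proof is correct and takes essentially the same route as the paper's: apply Claim~\ref{cl:path} at node $\overline{v\alpha}$, marginalize out the other nodes of $T_{\overline{v\alpha}}$ by repeated use of constraint~\eqref{cons:SA}, and identify the resulting sum over pairs $(s[\overline{u\alpha}],s[\overline{v\alpha}])$ with $z_{uv\alpha}$ via constraint~\eqref{cons:obj}. The one difference is that you spell out the locality step ($w\in U_\alpha$ iff $a[\overline{w\alpha}]((w,\alpha))=1$, justified by Theorem~\ref{thm:DP}(\ref{asm:DP:req}) and the connected-subtree property), which the paper uses implicitly by deferring to the argument for the last property of Claim~\ref{cl:DP-sol-state}.
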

\begin{proof}
Applying Claim~\ref{cl:path} with node $i=\overline{v\alpha}$, for any $\{s[k]\in \Sigma _k : k\in T_{\overline{v\alpha}}\}$, we have $\Pr[a[T_{\overline{v\alpha}}] = s[T_{\overline{v\alpha}}]] = y(s[T_{\overline{v\alpha}}])$. Let $D_{u\alpha}=\{s[\overline{u\alpha}]\in\Sigma_{[\overline{u\alpha}]}\, |\, s[\overline{u\alpha}]((u,\alpha))=1\}$ and similarly $D_{v\alpha}=\{s[\overline{v\alpha}]\in\Sigma_{[\overline{v\alpha}]}\, |\, s[\overline{v\alpha}]((v,\alpha))=1\}$. Because $\overline{u\alpha}\in T_{\overline{v\alpha}}$,
{\small \begin{align*}
&\Pr[u\in U_\alpha, v\not\in U_\alpha ] =\sum_{s[\overline{u\alpha}] \in D_{u\alpha}} \sum_{s[\overline{v\alpha}]\not\in D_{v\alpha}} \sum_{\substack{s[k]\in\Sigma_{k} \\ k \in T_{[\overline{v\alpha}]}\setminus \{\overline{u\alpha}\} \setminus \{\overline{v\alpha}\}}}  y(s[\overline{v\alpha}])\\& = \sum_{s[\overline{u\alpha}] \in D_{u\alpha}} \sum_{s[\overline{v\alpha}]\not\in D_{v\alpha}} y(s[\{\overline{u\alpha},\overline{v\alpha}\}]).
\end{align*}}

The last equality above is by repeated application of LP constraint~\eqref{cons:SA} where we use $T_{\overline{v\alpha}}\in {\cal P}$. Similarly,
{\small \begin{align*}
&\Pr[u\not\in U_\alpha, v\in U_\alpha ] = \sum_{s[\overline{u\alpha}]\not \in D_{u\alpha}} \sum_{s[\overline{v\alpha}]\in D_{v\alpha}} y(s[\{\overline{u\alpha},\overline{v\alpha}\}]),
\end{align*}}
which combined with constraint~\eqref{cons:obj} implies $\Pr[|\{u,v\}\cap U_\alpha| =1 ] = z_{uv\alpha}$.
\end{proof}

\begin{lemma}\label{lem:sym-alg-cut2}
	Consider any $u,v\in V$ and $\alpha\in [k]$ such that $\overline{u\alpha} \not\in T_{\overline{v\alpha}}$ and $\overline{v\alpha} \not\in T_{\overline{u\alpha}}$. 	 Then the probability that edge $(u,v)$ is cut by  $U_{\alpha} =\{v\in V:h_r((v,\alpha))=1\}$ is at least $z_{uv\alpha}/2$. 
\end{lemma}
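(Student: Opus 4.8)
The plan is to localize the two vertices in the tree and exploit the conditional independence that the rounding creates across sibling subtrees. Let $m$ be the meet (lowest common ancestor) of $\overline{u\alpha}$ and $\overline{v\alpha}$ in ${\cal T}$, and let $c_u,c_v$ be the children of $m$, with $c_u$ (resp.\ $c_v$) on the path from $m$ to $\overline{u\alpha}$ (resp.\ $\overline{v\alpha}$). First I would observe that the hypotheses $\overline{u\alpha}\notin T_{\overline{v\alpha}}$ and $\overline{v\alpha}\notin T_{\overline{u\alpha}}$ force $c_u\neq c_v$ and, moreover, force $\overline{u\alpha}$ to lie strictly inside the subtree of $c_u$ and $\overline{v\alpha}$ strictly inside the subtree of $c_v$: if, say, $\overline{u\alpha}=m$ or $\overline{u\alpha}=c_u$, then $\overline{u\alpha}$ would be a node on, or a child of a node on, the $r$--$\overline{v\alpha}$ path, hence in $T_{\overline{v\alpha}}$, a contradiction.

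Next, set $\Gamma:=T_{c_u}=T_{c_v}=T_m\cup\{c_u,c_v\}$. Two facts about the rounding, both following the template of Claim~\ref{cl:path} and constraint~\eqref{cons:SA}, drive the argument. (i) Conditioned on the sampled states $a[\Gamma]=\gamma$, the restriction of the run to the subtree of $c_u$ and its restriction to the subtree of $c_v$ are independent: at every node $i$ below $c_u$ the conditional law~\eqref{eq:sym-SA-round} depends on $a[T_i]$, and $T_i$ meets the subtree of $c_v$ only in $c_v\in\Gamma$, so the two branches consume disjoint fresh randomness. (ii) Since $\Gamma\cup\{\overline{u\alpha}\}\subseteq T_{\overline{u\alpha}}$ and $\Gamma\cup\{\overline{v\alpha}\}\subseteq T_{\overline{v\alpha}}$, marginalizing Claim~\ref{cl:path} gives $\Pr[a[\Gamma]=\gamma]=y(\gamma)$ and shows that the conditional marginals $a_\gamma:=\Pr[u\in U_\alpha\mid\gamma]$ and $b_\gamma:=\Pr[v\in U_\alpha\mid\gamma]$ agree with the corresponding LP conditionals. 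Writing $A=\{u\in U_\alpha\}$ and $B=\{v\in U_\alpha\}$, fact (i) yields $\Pr[\,(u,v)\text{ cut by }U_\alpha\mid\gamma\,]=a_\gamma+b_\gamma-2a_\gamma b_\gamma$, whereas by~\eqref{cons:obj} and marginalization $z_{uv\alpha}=\E_\gamma[\,a_\gamma+b_\gamma-2c_\gamma\,]$, where $c_\gamma:=y(A\wedge B\mid\gamma)$ is the possibly correlated LP joint, satisfying $c_\gamma\ge\max(0,a_\gamma+b_\gamma-1)$ by Fr\'echet/inclusion--exclusion.

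It then suffices to prove the pointwise bound $a+b-2ab\ \ge\ \tfrac12\,(a+b-2c)$ for all $a,b\in[0,1]$ and $c\in[\max(0,a+b-1),\,\min(a,b)]$, and take expectation over $\gamma$. This rearranges to $a+b-4ab+2c\ge 0$, for which I would use the identity $a+b-4ab+2c=(a+b)(1-a-b)+(a-b)^2+2c$. If $a+b\le 1$, all three summands are nonnegative (using $c\ge0$); if $a+b>1$, substituting $c\ge a+b-1$ gives $(a+b)(1-a-b)+2c\ge(a+b-1)(2-a-b)\ge0$ while $(a-b)^2\ge0$, so the bound holds in either case.

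The conceptually delicate step, and the one I expect to be the main obstacle, is fact (i): the conditional independence of the two sibling subtrees given $a[\Gamma]$. It requires verifying that for every node $i$ strictly below $c_u$ the set $T_i$ touches $c_v$'s subtree only at $c_v\in\Gamma$, so that conditioning on $\Gamma$ genuinely decouples the two branches; this is precisely where the hypothesis that $\overline{u\alpha}$ and $\overline{v\alpha}$ sit in distinct subtrees below their meet is used. The remaining pieces (the structural claim about $\overline{u\alpha},\overline{v\alpha}$, the marginal matching in fact (ii), and the final elementary inequality) are routine once this independence is in place.
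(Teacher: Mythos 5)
Your proof is correct and follows essentially the same route as the paper's: condition on the sampled states at $T_m\cup\{c_u,c_v\}$ (the paper's event ${\cal E}$ with $i$ the least common ancestor and $j,j'$ its children), use conditional independence of the two sibling subtrees, match the algorithm's conditional marginals to the LP conditionals via Claim~\ref{cl:path} and constraint~\eqref{cons:SA}, and finish with an elementary inequality comparing the independent (product-form) cut probability to the LP's correlated one. The only cosmetic difference is that you prove that final inequality directly, via the Fr\'echet bounds on $c_\gamma$ and the identity $a+b-4ab+2c=(a+b)(1-a-b)+(a-b)^2+2c$, whereas the paper obtains the same fact by building an explicit coupling table and citing Observation~\ref{obs:joint_indep} from \cite{SLN17}.
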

\begin{proof}
We first state a useful observation.

\begin{observation}[Observation 1 in~\cite{SLN17}]
	Let $X,Y$ be two jointly distributed $\{0,1\}$ random variables. Then $\Pr(X=1)\Pr(Y=0)+\Pr(X=0)\Pr(Y=1)\ge\frac12[\Pr(X=0,Y=1)+\Pr(X=1,Y=0)]$.\label{obs:joint_indep}
\end{observation}
Now we start to prove Lemma~\ref{lem:sym-alg-cut2}. In order to simplify notation, we define:
{\small \begin{equation*}
z^+_{uv\alpha}\,\, =\,\, \sum_{\substack{s[\overline{ u \alpha }]\in\Sigma_{\overline{ u \alpha }},\, s[\overline{ v \alpha }]\in\Sigma_{\overline{ v \alpha }} \\ s[\overline{ u \alpha }]((u,\alpha))=1, s[\overline{ v \alpha }]((v,\alpha))=0}}{y(s[\{\overline{ u \alpha },\overline{ v \alpha }\}])},
\end{equation*}}
{\small \begin{equation*}
z^-_{uv\alpha}\,\, =\,\, \ \sum_{\substack{s[\overline{ u \alpha }]\in\Sigma_{\overline{ u \alpha }},\, s[\overline{ v \alpha }]\in\Sigma_{\overline{ v \alpha }} \\ s[\overline{ u \alpha }]((u,\alpha))=0, s[\overline{ v \alpha }]((v,\alpha))=1}}{y(s[\{\overline{ u \alpha },\overline{ v \alpha }\}])}.
\end{equation*}}
Note that $z_{uv} = z_{uv\alpha}^+ + z_{uv\alpha}^-$.

Let $D_{u\alpha}=\{s[\overline{u\alpha}]\in\Sigma_{[\overline{u\alpha}]}\, |\, s[\overline{u\alpha}]((u,\alpha))=1\}$ and $D_{v\alpha}=\{s[\overline{v\alpha}]\in\Sigma_{[\overline{v\alpha}]}\, |\, s[\overline{v\alpha}]((v,\alpha))=1\}$. Let $i$ denote the least common ancestor of nodes $\overline{u\alpha}$ and $\overline{v\alpha}$, and $\{j,j'\}$ the two children of $i$. Note that $T_j = T_{j'} = T_i\cup\{j,j'\}$ and $T_{\overline{u\alpha}} , T_{\overline{v\alpha}} \supseteq T_j$. Because $\overline{u\alpha} \not\in T_{\overline{v\alpha}}$ and $\overline{v\alpha} \not\in T_{\overline{u\alpha}}$, both $\overline{u\alpha}$ and $\overline{v\alpha}$ are strictly below $j$ and $j'$ (respectively) in the tree decomposition.

For any choice of states $\{s[k]\in \Sigma_k\}_{k\in T_j}$ define:
{\small $$z^+_{uv\alpha}(s[T_j]) = \sum_{s[\overline{u\alpha}] \in D_{u\alpha}} \sum_{s[\overline{v\alpha}]\not\in D_{v\alpha}} \frac{y(s[T_j\cup\{\overline{u\alpha},\overline{v\alpha}\}])}{y(s[T_j])},$$}
and similarly $z^-_{uv\alpha}(s[T_j])$.

In the rest of the proof, we fix states $\{s[k]\in \Sigma_k\}_{k\in T_j}$ and {\em condition} on the event ${\cal E}$ that $a[T_j]=s[T_j]$. We will show $\Pr[|\{u,v\}\cap U_\alpha|=1 \, |\, {\cal E}]$
{\small \begin{equation}\label{eq:sym-cut-prob}
\,\, \ge \,\, \frac12\left(z^+_{uv\alpha}(s[T_j])+z^-_{uv\alpha}(s[T_j])\right).
\end{equation}
}

By taking expectation over the conditioning ${\cal E}$, this would imply  Lemma~\ref{lem:sym-alg-cut2}.

We now define the following indicator random variables (conditioned on ${\cal E}$).
{\small \begin{equation*}
I_{u\alpha}=\begin{cases}
0\quad \mbox{ if }a[\overline{u\alpha}] \not\in D_{u\alpha}\\
1\quad \mbox{ if }a[\overline{u\alpha}] \in D_{u\alpha}
\end{cases}
\mbox{and }\,\,
I_{v\alpha}=\begin{cases}
0\quad \mbox{ if }a[\overline{v\alpha}]\not\in D_{v\alpha}\\
1\quad \mbox{ if }a[\overline{v\alpha}] \in D_{v\alpha}
\end{cases}
\end{equation*} }
Observe that $I_{u\alpha}$ and $I_{v\alpha}$ ({\em conditioned} on ${\cal E}$) are independent because $\overline{u\alpha}, \overline{v\alpha}\not\in T_{j}$, and $\overline{u\alpha}$ and $\overline{v\alpha}$ appear in distinct subtrees under node $i$. So,
{\small \begin{align}
&\Pr[|\{u,v\}\cap U_\alpha|=1 \, |\, {\cal E}]\notag\\& = \Pr[I_{u\alpha}=1]
 \cdot \Pr[I_{v\alpha}=0] + \Pr[I_{u\alpha}=0]\cdot \Pr[I_{v\alpha}=1]\label{eq:sym-alg-cut-prob}
\end{align}}

For any $s[k]\in\Sigma_k$ for $k\in T_{\overline{u\alpha}}\setminus T_j$, we have by Claim \ref{cl:path} and $T_j\sse T_{\overline{u\alpha}}$ that
{\small \begin{align*}
&\Pr[a[T_{\overline{u\alpha}}]=s[T_{\overline{u\alpha}}]\,|\, a[T_j]=s[T_j]] \,\, =\,\, \frac{y(s[T_{\overline{u\alpha}}])}{y(s[T_j])}.
\end{align*}}
Therefore $\Pr[I_{u\alpha}=1]$ equals
{\small \begin{align*}
&\sum_{s[\overline{u\alpha}]\in D_{u\alpha}} \,\, \sum_{\substack{k\in T_{\overline{u\alpha}}\setminus T_j\setminus \{\overline{u\alpha}\} \\s[k]\in\Sigma_k}}\frac{y(s[T_{\overline{u\alpha}}])}{y(s[T_j])} =\sum_{s[\overline{u\alpha}]\in D_{u\alpha}}\frac{y(s[T_{j}\cup\{\overline{u\alpha}\}])}{y(s[T_j])}.
\end{align*} }
The last equality follows by repeatedly using LP constraint~\eqref{cons:SA} and the fact that $T_{\overline{u\alpha}}\in {\cal P}$. Furthermore, note that $T_j\cup\{\overline{u\alpha},\overline{v\alpha}\}\in {\cal P}$; again by constraint~\eqref{cons:SA},
{\small \begin{align*}
&\Pr[I_{u\alpha}=1]=\sum_{s[\overline{u\alpha}]\in D_{u\alpha}}\frac{y(s[T_{j}\cup\{\overline{u\alpha}\}])}{y(s[T_j])}\\
&\qquad=\sum_{s[\overline{u\alpha}]\in D_{u\alpha}}\sum_{s[\overline{v\alpha}]\in \Sigma_{\overline{v\alpha}}}\frac{y(s[T_{j}\cup\{\overline{u\alpha},\overline{v\alpha}\}])}{y(s[T_j])}\\
&\qquad=\sum_{s[\overline{u\alpha}]\in D_{u\alpha}}\sum_{s[\overline{v\alpha}]\in D_{v\alpha}}\frac{y(s[T_{j}\cup\{\overline{u\alpha},\overline{v\alpha}\}])}{y(s[T_j])} + z_{uv\alpha}^+(s[T_j])
\end{align*}
}

Similarly,
{\small \begin{align*}
	&\Pr[I_{v\alpha}=1]=\sum_{s[\overline{v\alpha}]\in D_{v\alpha}}\frac{y(s[T_{j}\cup\{\overline{v\alpha}\}])}{y(s[T_j])}\\
	&\qquad=\sum_{s[\overline{u\alpha}]\in D_{u\alpha}}\sum_{s[\overline{v\alpha}]\in D_{v\alpha}}\frac{y(s[T_{j}\cup\{\overline{u\alpha},\overline{v\alpha}\}])}{y(s[T_j])} + z_{uv\alpha}^-(s[T_j]).
\end{align*}
\begin{align*}
&\Pr[I_{u\alpha}=0]= \sum_{s[\overline{u\alpha}]\not\in D_{u\alpha}}\frac{y(s[T_{j}\cup\{\overline{u\alpha}\}])}{y(s[T_j])}\\
&\qquad=\sum_{s[\overline{u\alpha}]\not\in D_{u\alpha}}\sum_{s[\overline{v\alpha}]\not\in D_{v\alpha}}\frac{y(s[T_{j}\cup\{\overline{u\alpha},\overline{v\alpha}\}])}{y(s[T_j])} + z_{uv\alpha}^-(s[T_j]).
\end{align*}
}

Now define $\{0,1\}$ random variables $X$ and $Y$ jointly distributed as:
\begin{table}[h!]
	\centering
	\resizebox{\columnwidth}{!}{
		\begin{tabular}{|c|c|c|}
			\hline
			&$Y=0$ & $Y=1$\\
			\hline
			$X=0$ & $\Pr[I_{u\alpha}=0] - z_{uv\alpha}^-(s[T_j])$ & $z_{uv\alpha}^-(s[T_j])$  \\
			\hline
			$X=1$ & $z_{uv\alpha}^+(s[T_j])$ & $\Pr[I_{u\alpha}=1] - z_{uv\alpha}^+(s[T_j])$\\
			\hline
		\end{tabular}
	}
\end{table}

Note that $\Pr[X=1]=\Pr[I_{u\alpha}=1]$ and $\Pr[Y=1]=\Pr[I_{u\alpha}=1] - z_{uv\alpha}^+(s[T_j])+z_{uv\alpha}^-(s[T_j])=\Pr[I_{v\alpha}=1]$. So, applying Observation~\ref{obs:joint_indep} and using~\eqref{eq:sym-alg-cut-prob} we have $\Pr[|\{u,v\}\cap U_\alpha|=1 \, |\, {\cal E}]$ is at least
{\small \begin{align*}
\frac12\left( \Pr[X=0,Y=1]+\Pr[X=1,Y=0]\right),
\end{align*} }
which implies~\eqref{eq:sym-cut-prob}.
\end{proof}

\begin{lemma}\label{lem:alg-cut}
	For any $u,v\in V$, the probability that edge $(u,v)$ is cut by the $k$-partition $\{U_\alpha\}_{\alpha=1}^k$ is at least $\frac14 \sum_{\alpha=1}^k z_{uv\alpha}$.
\end{lemma}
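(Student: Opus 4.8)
The plan is to combine the two preceding cut-probability lemmas with a simple double-counting identity over the $k$ parts. For each $\alpha\in[k]$ I will write $p_\alpha := \Pr\big[\,|\{u,v\}\cap U_\alpha| = 1\,\big]$ for the probability that edge $(u,v)$ is cut by part $U_\alpha$. The first step is to note that Lemmas~\ref{lem:sym-alg-cut1} and~\ref{lem:sym-alg-cut2} are jointly exhaustive in $\alpha$: their hypotheses partition the possible tree-relationships between the highest nodes $\overline{u\alpha}$ and $\overline{v\alpha}$. If $\overline{u\alpha}\in T_{\overline{v\alpha}}$ or $\overline{v\alpha}\in T_{\overline{u\alpha}}$, then Lemma~\ref{lem:sym-alg-cut1} (applied with the appropriate orientation of $u$ and $v$, using that $z_{uv\alpha}$ is symmetric) gives $p_\alpha = z_{uv\alpha}$; otherwise Lemma~\ref{lem:sym-alg-cut2} gives $p_\alpha \ge z_{uv\alpha}/2$. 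Either way I obtain the uniform per-part bound $p_\alpha \ge \tfrac12 z_{uv\alpha}$.

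The second step is the double-counting identity. By Lemma~\ref{lem:alg-feasible} the algorithm always returns a genuine $k$-partition, so in every outcome each of $u,v$ lies in exactly one part. If $u,v$ share a part, no $U_\alpha$ contains exactly one of them; if they lie in distinct parts $U_\beta, U_\gamma$, then precisely those two parts contain exactly one of $\{u,v\}$. Hence, pointwise over the algorithm's randomness,
\[
\sum_{\alpha=1}^k \mathbf{1}\big[\,|\{u,v\}\cap U_\alpha|=1\,\big] \;=\; 2\cdot \mathbf{1}\big[\,\text{edge }(u,v)\text{ is cut}\,\big],
\]
and taking expectations yields $\sum_{\alpha=1}^k p_\alpha = 2\,\Pr[(u,v)\text{ is cut}]$. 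Combining this with the per-part bound gives
\[
\Pr[(u,v)\text{ is cut}] \;=\; \tfrac12\sum_{\alpha=1}^k p_\alpha \;\ge\; \tfrac14\sum_{\alpha=1}^k z_{uv\alpha},
\]
as claimed.

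I do not expect a genuine obstacle; the only points needing care are checking exhaustiveness of the two lemmas (handled above) and the factor of $2$ in the double count. That factor --- each cut edge being charged to exactly the two parts containing its endpoints --- is precisely what turns the per-part $\tfrac12$-guarantee into the overall $\tfrac14$, and it mirrors the $\tfrac12$ coefficient already present in the LP objective~\eqref{LP}.
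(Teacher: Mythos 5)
Your proof is correct and is essentially the paper's own argument: the paper likewise enumerates the parts, applies Lemmas~\ref{lem:sym-alg-cut1} and~\ref{lem:sym-alg-cut2} to bound each per-part cut probability by $\tfrac12 z_{uv\alpha}$, and divides by the factor of $2$ arising because a cut edge is counted by exactly the two parts containing its endpoints. Your write-up merely makes explicit two points the paper leaves implicit --- the exhaustiveness of the two lemmas' hypotheses (via symmetry of $z_{uv\alpha}$) and the pointwise double-counting identity --- both of which are handled correctly.
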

\begin{proof}
Edge	$(u,v)$ is cut by $\{U_\alpha\}_{\alpha=1}^k$  if and only if $u\in U_{\alpha}$ and $v\in U_{\beta}$ for some $\alpha\ne \beta$. Enumerating all partition parts and applying Lemmas~\ref{lem:sym-alg-cut1} and~\ref{lem:sym-alg-cut2}, we get that the probability is at least $\frac14 \sum_{\alpha=1}^k z_{uv\alpha}$. The extra factor of $\frac{1}{2}$ is because any cut edge $(u,v)$ is cut by the partition twice: by the parts containing $u$ and $v$. 
\end{proof}
\noindent From Lemmas~\ref{lem:LP-LB}, \ref{lem:alg-feasible} and \ref{lem:alg-cut}, we obtain Theorem~\ref{thm:gen-k}.

\section{Applications}
We claim that \MSOD is powerful enough to model various graph properties; to that end, consider the following formulae, meant to model that a set $S$ is a vertex cover, an independent set, a dominating set, and a connected set, respectively:
\begin{align*}
&\varphi_{\text{vc}}(S) \equiv \forall \{u,v\} \in E:\, (u \in S) \vee (v \in S) \\
&\varphi_{\text{is}}(S) \equiv \forall \{u,v\} \in E:\, \neg \left((u \in S) \wedge (v \in S) \right) \\
&\varphi_{\text{ds}}(S) \equiv \forall v \in V:\, \exists u \in S:\, (v \not\in S) \\
& \qquad \qquad \qquad \qquad \implies \{u,v\} \in E \\
&\varphi_{\text{conn}}(S) \equiv \neg \big[  \exists U,V \subseteq S:\, U \cap V = \emptyset \wedge U \cup V\\
& \qquad = S   \wedge \neg \big( \exists \{u,v\} \in E: u \in U \wedge v \in V \big) \big]
\end{align*}
We argue as follows: $\varphi_{\text{vc}}$ is true if every edge has at least one endpoint in $S$; $\varphi_{\text{is}}$ is true if every edge does not have both endpoints in $S$; $\varphi_{\text{ds}}$ is true if for each vertex $v$ not in $S$ there is a neighbor $u$ in $S$; finally, $\varphi_{\text{conn}}$ is true if there does not exist a partition $U,V$ of $S$ with an edge going between $U$ and $V$.

We also show how to handle the precedence constraint.
Let $G$ be a directed graph; we require $S$ to satisfy that, for each arc $(u,v) \in E$, either $v \not\in S$, or $u, v \in S$.
This can be handled directly with CSP constraints: we have a binary variable for each vertex with the value $1$ indicating that a vertex is selected for $S$; then, for each arc $(u,v) \in E$, we have a constraint $C_{(u,v)} = \{(1,0), (0,0), (1,1)\}$.

It is known~\cite{Courcelle:1996} that many other properties are expressible in \MSOD, such as that $S$ is $k$-colorable, $k$-connected (both for fixed $k \in \NN$), planar, Hamiltonian, chordal, a tree, not containing a list of graphs as minors, etc.
It is also known how to encode directed graphs into undirected graphs in an ``\MSO-friendly'' way~\cite{Courcelle:1996}, which allows the expression of various properties of directed graphs.
Our results also extend to so-called \emph{counting \MSO}, where we additionally have a predicate of the form $|X| = p~\textrm{mod}~q$ for a fixed integer $q \in \NN$.

\section{Conclusions}
In this paper we obtained  $\frac12$-approximation algorithms for graph-\MSO-constrained max-$k$-cut problems, where the constraint graph has bounded treewidth. This work generalizes the class of constraints handled in~\cite{SLN17} and extends the result to the setting of max-$k$-cut. Getting an approximation ratio better than $\frac12$ for any of these problems is an interesting question, even for a specific \MSO-constraint.
Regarding Remark~\ref{rem:pc}, could our algorithm be improved to an FPT algorithm (runtime $g(\tau) n^{O(1)}$ for some function $g$)?
If not, is there an FPT algorithm parameterized by the (more restrictive) tree-depth of $G$?

\section*{Acknowledgement}
We thank Samuel Fiorini for raising the possibility of a systematic extension of our prior work \cite{SLN17}, which eventually lead to this paper.

Research of M. Koutecký is supported by a Technion postdoc grant. Research of J. Lee is supported in part by ONR grant N00014-17-1-2296. Part of this work was done while J. Lee was visiting the Simons Institute for the Theory of Computing (which was partially supported by the DIMACS/Simons Collaboration on Bridging Continuous and Discrete Optimization through NSF grant CCF-1740425). Research of V. Nagarajan and X. Shen is supported in part by NSF CAREER grant CCF-1750127.
\section*{References}
\bibliographystyle{elsarticle-num}
\bibliography{gcmc}

\end{document}